\newtheorem{theorem}{\bf Theorem}
\newtheorem{lemma}{\bf Lemma}
\newtheorem{definition}{\bf Definition}
\newcommand{\indi}[1]{{1\hspace{-2.3mm}{1}}_{\left\{#1\right\}}}
\newcommand{\m}[1]{\mathbf{#1}^m}
\newcommand{\lo}[1]{\log_2\left(#1\right)}
\newcommand{\lon}[1]{\ln\left(#1\right)}
\newcommand{\mk}[2]{\mathbf{#1}^{#2}}
\newcommand{\whatmn}[1]{\mathbf{\widehat{#1}}^m}
\newcommand{\co}[1]{\ensuremath{\underline{\underline{\text{C}_{#1}}}}}
\newcommand{\expect}[1]{\mathbb{E}\left[{#1}\right]}
\newcommand{\expectp}[2]{\mathbb{E}_{#1}\left[{#2}\right]}
\definecolor{DarkGreen2}{rgb}{0.00,0.6,0.08}
\title{The finite-dimensional {Witsenhausen} counterexample}
\author{Pulkit Grover,  Se Yong Park and Anant Sahai\\ Department of EECS, University of California at Berkeley, CA-94720, USA\\ \{pulkit, sahai, separk\}@eecs.berkeley.edu}
\begin{document}\maketitle
\vspace{-1in}
\begin{abstract}
Recently, a vector version of Witsenhausen's counterexample was considered and it was shown that in that limit of infinite vector length, certain quantization-based control strategies are provably within a constant factor of the optimal cost for all possible problem parameters. In this paper, finite vector lengths are considered with the dimension being viewed as an additional problem parameter. By applying a large-deviation ``sphere-packing'' philosophy, a lower bound to the optimal cost for the finite dimensional case is derived that uses appropriate shadows of the infinite-length bound.  Using the new lower bound, we show that good lattice-based control strategies achieve within a constant factor of the optimal cost uniformly over all possible problem parameters, including the vector length. For Witsenhausen's original problem ---  the scalar case --- the gap between regular lattice-based strategies and the lower bound is  numerically never more than a factor of $8$. 
\end{abstract}
\vspace{-0.3in}
\section{Introduction}
%\footnote{Because this work conceptually builds upon \cite{WitsenhausenJournal,
%    CDCWitsenhausen}, there is significant overlap in the introduction. For the same reason, there is also overlap in the 
%  introduction with \cite{CDC09paper} although the scope there is
%  narrower and a largely different proof technique is
%  employed.}
 Distributed control problems have long proved challenging
for control engineers. In 1968, Witsenhausen~\cite{Witsenhausen68}
gave a counterexample showing that even a seemingly simple distributed
control problem can be hard to solve. For the counterexample,
Witsenhausen chose a two-stage distributed LQG system and provided a
nonlinear control strategy that outperforms all linear laws. It is now
clear that the non-classical information pattern of Witsenhausen's
problem makes it quite challenging\footnote{In words of Yu-Chi
  Ho~\cite{YuChiHoCDC08}, ``the simplest problem becomes the hardest
  problem.''}; the optimal strategy and the optimal costs for the
problem are still unknown --- non-convexity makes
the search for an optimal strategy hard~\cite{bansalbasar,baglietto,
  LeeLauHo}. Discrete approximations of the problem~\cite{hochang} are
even NP-complete\footnote{More precisely, results in~\cite{papadimitriou} imply that the discrete counterparts to the Witsenhausen counterexample are NP-complete if the assumption of Gaussianity of the primitive random variables is relaxed. Further, it is also shown in~\cite{papadimitriou} that with this relaxation, a polynomial time solution to the original \textit{continuous} problem would imply $P=NP$, and thus conceptually the relaxed continuous problem is also hard.}~\cite{papadimitriou}.

In the absence of a solution, research on the counterexample has bifurcated into two different directions. Since there is no known systematic approach to obtain provably optimal solutions, a body of literature (e.g.~\cite{baglietto}~\cite{LeeLauHo}~\cite{marden} and the references therein) applies search heuristics to explore the space of possible control actions and obtain intuition into the structure of good strategies. Work in this direction has also yielded considerable insight into addressing non-convex problems in general.

In the other direction the emphasis is on understanding the role of \textit{implicit communication} in the counterexample. In distributed control, control actions not only attempt to reduce the immediate control costs, they can also communicate relevant information to other controllers to help them reduce costs. Witsenhausen~\cite[Section 6]{Witsenhausen68} and Mitter and Sahai~\cite{AreaExamPaper} aim at developing systematic constructions based on implicit communication. Witsenhausen's two-point quantization strategy is motivated from the optimal strategy for two-point symmetric distributions of the initial state~\cite[Section 5]{Witsenhausen68} and it outperforms linear strategies for certain parameter choices. Mitter and Sahai~\cite{AreaExamPaper} propose multipoint-quantization strategies that, depending on the problem parameters, can outperform linear strategies by an arbitrarily-large factor. 

Various modifications to the counterexample investigate if  misalignment of these two goals of control and implicit communication makes the problems hard~\cite{bansalbasar,basarCDC08,rotkowitzCDC08,RotkowitzLall,Allerton09Paper,rotkowitz} (see~\cite{WitsenhausenJournal} for a survey of other such modifications). Of particular interest are two  works, those of Rotkowitz and Lall~\cite{RotkowitzLall}, and Rotkowitz~\cite{rotkowitz}. The first work~\cite{RotkowitzLall} shows that with extremely fast, infinite-capacity, and perfectly reliable external channels, the optimal controllers are linear not just for the Witsenhausen's counterexample (which is a simple observation), but for more general problems as well. This suggests that allowing for an external channel between the two controllers in Witsenhausen's counterexample might simplify the problem. However, when the channel is not perfect, Martins~\cite{MartinsSideInfo} shows that finding optimal solutions can be hard\footnote{Martins shows that nonlinear strategies that do not even use the external channel can outperform linear ones that do use the channel where the external channel SNR is high. As is suggested by what David Tse calls the ``deterministic perspective" (along the lines of~\cite{DeterministicModel,SalmanThesis,DeterministicApproach}),  linear strategies do not make good use of the external channel because they only communicate the ``most significant bits'' --- which can anyway be estimated reliably at the second controller. So if the uncertainty in the initial state is large, the external channel is only of limited help and there may be substantial advantage in having the controllers talk through the plant.  A similar problem is considered by Shoarinejad et al in~\cite{shoarinejad}, where noisy side information of the source is available at the receiver. Since this formulation is even more constrained than that in~\cite{MartinsSideInfo}, it is clear that nonlinear strategies outperform linear for this problem as well.}. A closer inspection of the problem in~\cite{MartinsSideInfo} reveals that nonlinear strategies can outperform linear ones by an arbitrarily large factor for any fixed SNR on the external channel. Even to make good use of the external channel resource, one needs nonlinear strategies.

The second work~\cite{rotkowitz} shows that if one considers the induced norm instead of the original expected quadratic cost, linear control laws are optimal and easy to find. The induced norm formulation is therefore easy to solve, and at the same time, it makes no assumptions on the state and the noise distributions. This led Doyle to ask if Witsenhausen's counterexample (with expected quadratic cost) is at all relevant~\cite{PathsAhead} --- after all, not only is the LQG formulation more constrained, it is also harder to solve.
% \footnote{The question of whether this means that one need not consider nonlinear strategies for the counterexample was raised by Prof. John Doyle in the discussions at \textit{LIDS Paths Ahead in the Science of Information and Decision Systems Symposium, 2009}.}. 
The question thus becomes what norm is more appropriate, and the answer must come from what is relevant in practical situations. In practice, one usually knows the ``typical" amplitude of the noise and the initial state, or at least  rough bounds them. The induced-norm formulation may therefore be quite conservative: since no assumptions are made on the state and the noise, it requires budgeting for completely arbitrary behavior of state and noise --- they can even collude to raise the costs for the chosen strategy. To see how conservative the induced-norm formulation can be, notice the following: even allowing for colluding state and noise, mere knowledge of a bound on the noise amplitude suffices to have quantization-based nonlinear strategies outperform linear strategies by an arbitrarily large factor (with the expected cost replaced by a hard-budget. The proof is simpler than that in~\cite{AreaExamPaper}, and is left as an exercise to the interested reader for reasons of limited space). Conceptually, the LQG formulation is only abstracting some knowledge of noise and initial state behavior. In practical situations where such knowledge exists, designs based on an induced norm formulation (and linear strategies) may be needlessly expensive because they budget for impossible events.

%Thus, implicit communication is a core issue that needs to be understood, and the LQG model is one formulation that helps develop such an understanding. 

% Further, if the implicit communication is
% indeed what makes the problem hard, it gives hope that strategies
% based on understanding this channel are close to optimal.

The fact that nonlinear strategies can be arbitrarily better brings us to a question that has received little attention in the literature --- how far are the proposed nonlinear strategies from the optimal? It is believed that the strategies of Lee, Lau and Ho~\cite{LeeLauHo} are close to optimal. In Section~\ref{sec:conclusions}, we will see that these strategies can be viewed as an instance of the ``dirty-paper coding'' strategy in information theory, and quantify their advantage over pure quantization based strategies. Despite their improved performance, there was no guarantee that these strategies are indeed close to optimal\footnote{The search in~\cite{LeeLauHo} is not exhaustive. The authors first find a good quantization-based solution. Inspired by piecewise linear strategies (from the neural networks based search of Baglietto \textit{et al}~\cite{baglietto}), each quantization step is broken into several small sub-steps to approximate a piecewise linear curve. %Small improvements on the performance have recently been obtained in~\cite{marden} using a game-theoretic approach in searching for a solution, and obtaining slightly better optimized parameters for the same strategy.
}. Witsenhausen~\cite[Section 7]{Witsenhausen68} derived a lower bound on the costs that is loose in the interesting regimes of small $k$ and large $\sigma_0^2$~\cite{WitsenhausenJournal,CDC09paper}, and hence is insufficient to obtain any guarantee on the gap from optimality.

%In order to examine the proximity to optimality of these strategies, we need lower bounds on the required average costs. Only two lower bounds exist in the literature. The first, derived by Witsenhausen himself~\cite[Section 7]{Witsenhausen68} gives some side-information of the initial state to the second controller (see~\cite{CDC09paper} for more details). The second was derived by Grover and Sahai~\cite{WitsenhausenJournal} based on an information-theoretic analysis on a vector extension of the counterexample. Unfortunately, both of these lower bounds turn out to be insufficient for obtaining strong guarantees on proximity to optimality for these strategies (as is shown in~\cite{WitsenhausenJournal, CDC09paper}). 
 %This might suggest that the proposed strategies are far from optimal. However, it was shown in~\cite{WitsenhausenJournal} that the strategies of~\cite{LeeLauHo,marden} show up naturally as communication strategies based on ``dirty-paper coding" (DPC) in information theory, again suggesting that these are good strategies to use (in Section~\ref{sec:conclusions}, we shall quantify the advantage of these strategies over quantization-based strategies). 

Towards obtaining such a guarantee, a strategic simplification of the problem was introduced in~\cite{CDCWitsenhausen,WitsenhausenJournal} where we consider an asymptotically-long vector version of the problem. This problem is related to a toy communication problem that we call ``Assisted Interference Suppression'' (AIS) which is an extension of the dirty-paper coding (DPC)~\cite{CostaDirtyPaper} model in information theory. There has been a burst of interest in extensions to DPC in information theory mainly along two lines of work --- multi-antenna Gaussian channels, and the ``cognitive-radio channel.'' For multi-antenna Gaussian channels, a problem of much theoretical and practical interest, DPC turns out to be the optimal strategy (see~\cite{ShamaiBroadcast} and the references therein). The ``cognitive radio channel" problem was formulated by Devroye \textit{et al}~\cite{Devroye1}. This inspired much work in asymmetric cooperation between nodes~\cite{JovicicViswanath,KimStateAmplification,MerhavMasking,KhistiLatticeMDPC,KotagiriLaneman}. In our work~\cite{WitsenhausenJournal,CDCWitsenhausen}, we developed a new lower bound to the optimal performance of the vector Witsenhausen problem. Using this bound, we show that  vector-quantization based strategies attain within a factor of $4.45$ of the optimal cost for all problem parameters in the limit of infinite vector length. Further, combinations of linear and DPC-based strategies attain within a factor $2$ of the optimal cost. This factor was later improved to $1.3$ in~\cite{ITW09Paper} by improving the lower bound. While a constant-factor result does not establish true optimality, such results are often helpful in the face of intractable problems like those that are otherwise NP-hard \cite{ApproximationBook}. This constant-factor spirit has also been useful in understanding other stochastic control problems
\cite{CogillLall06, CogillLallHespanha07} and in the asymptotic analysis of  problems in multiuser wireless communication \cite{EtkinOneBit, DeterministicModel}.

While the lower bound in~\cite{WitsenhausenJournal} holds for all vector lengths, and hence for the scalar counterexample as well, the ratio of the costs attained by the strategies of~\cite{AreaExamPaper} and the lower bound diverges in the limit
$k\rightarrow 0$ and $\sigma_0\rightarrow\infty$. This suggests that there is a significant finite-dimensional aspect of the problem that
is being lost in the infinite-dimensional limit: either quantization-based strategies are bad, or the lower bound of~\cite{WitsenhausenJournal} is very loose. This effect is
elucidated in \cite{CDC09paper} by deriving a different lower bound showing 
that quantization-based strategies indeed attain within a
constant\footnote{The constant is large in \cite{CDC09paper}, but as
  this paper shows, this is an artifact of the proof rather than
  reality.} factor of the optimal cost for Witsenhausen's original
problem. The bound in~\cite{CDC09paper} is in the spirit of Witsenhausen's original lower
bound, but is more intricate. It captures the idea that observation
noise can force a second-stage cost to be incurred unless the first
stage cost is large.

In this paper, we revert to the line of attack initiated by the vector
simplification of \cite{WitsenhausenJournal}. In Section~\ref{sec:notation}, we formally state the vector version of the counterexample. For obtaining
good control strategies, we observe that the action of the first controller in the quantization-based strategy of~\cite{AreaExamPaper} can be thought of as forcing the state to a point on a one-dimensional
\textit{lattice}. Extending this idea, in Section~\ref{sec:lattice}, we provide lattice-based quantization strategies for finite dimensional spaces and analyze their performance.

Building upon the vector
lower bound of~\cite{WitsenhausenJournal}, a new lower bound is derived in Section~\ref{sec:lowerbound} which is in the spirit of large-deviations-based information-theoretic bounds for finite-length communication problems\footnote{An alternative Central Limit Theorem (CLT)-based approach has also been used in the information-theory literature~\cite{Baron,VerduCLT,VerduDispersion}. In~\cite{VerduCLT,VerduDispersion}, the approach is used to obtain extremely tight approximations at moderate blocklengths for Shannon's noisy communication problem.}
(e.g.~\cite{Gallager,PinskerNoFeedback,OurUpperBoundPaper,
  waterslide}). In particular, our new bound extends the tools
in~\cite{waterslide} to a setting with unbounded distortion measure. In Section~\ref{sec:ratio}, we combine the lattice-based upper bound (Section~\ref{sec:lattice}) and the large-deviations lower bound (Section~\ref{sec:lowerbound}) to show that lattice-based quantization strategies attain within a constant factor of the optimal cost for any finite length, uniformly over all problem parameters. For example, this constant factor is numerically found to be smaller than $8$ for the original scalar problem. We also provide a constant factor that holds uniformly over all vector lengths.
%The organization of the paper is as follows. In
%Section~\ref{sec:notation}, we define the vector Witsenhausen problem and introduce the notation. In Section~\ref{sec:lattice}, lattice-based 
%strategies for any vector length $m$ are described. Lower bounds (that
%depend on $m$) on optimal costs are derived in
%Section~\ref{sec:lowerbound}. Section~\ref{sec:ratio} shows that the
%ratio of the upper and the lower bounds is bounded uniformly over the
%dimension $m$ and the other problem parameters.

To understand the significance of the result, consider the following. At $k=0.01$ and $\sigma_0=500$, the cost attained by the optimal linear scheme is close to $1$. The cost attained by a quantization-based\footnote{The quantization points are regularly spaced about $9.92$ units apart. This results in a first stage cost of about $8.2\times 10^{-4}$ and a second stage cost of about $6.7\times 10^{-5}$.} scheme is $8.894\times 10^{-4}$. Our new lower bound on the cost is $3.170\times 10^{-4}$.  Despite the small value of the lower bound,
the ratio of the quantization-based upper bound and the lower bound
for this choice of parameters is less than three!

We conclude in  Section~\ref{sec:conclusions} outlining  directions of future research and  speculating on the form of finite-dimensional strategies (following~\cite{WitsenhausenJournal}) that we conjecture might be optimal.

%As a next step towards showing that approximately-optimal strategies
%can be found for all Witsenhausen-like problems, we consider the
%vector Witsenhausen problem with a finite vector length. The lower
%bounds derived here extend naturally to this case. 
% We show that the class of
%lattice-based quantization strategies performs within a constant
%factor of optimal for any dimension. The approximation factor can be
%bounded by a constant uniformly over all choices of problem parameters,
%{\em including the dimension}.

% Any finite quantization (number of points not increasing with $\sigma_0^2$ would not work...)

%If Witsenhausen's problem is viewed with an additional parameter of the vector length, the optimal costs are still characterized to within a constant factor.

%With the tools developed here, it seems that the tradeoff between control, communication and control costs in the vector problem can now be obtained. 
%%%%%%%%%%%%%%%%%%%%%%%%%%%%%%%%%%%%%%%%%%%%%%%%%%%%%%%%%%%%%%%%%%%%%%%%%%%%%%%%
\section{Notation and problem statement}
\label{sec:notation}
\begin{figure}[htb]
\begin{center}
   \includegraphics[scale=0.45]{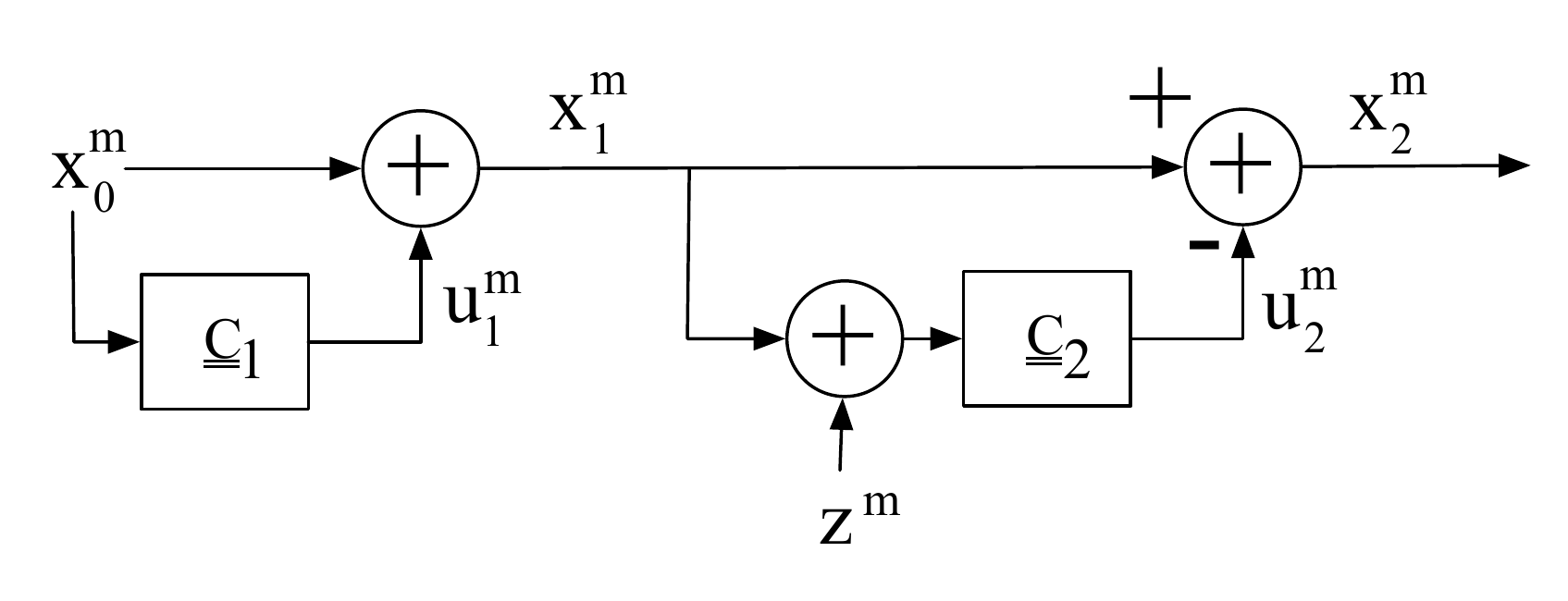}
\caption{Block-diagram for vector version of Witsenhausen's counterexample of length $m$. }
\label{fig:infotheory}
\end{center}
\end{figure}

Vectors are denoted in bold. Upper case tends to be used for random
variables, while lower case symbols represent their realizations. $W(m,k^2,\sigma_0^2)$ denotes the vector version of
Witsenhausen's problem of length $m$, defined as follows (shown in Fig.~\ref{fig:infotheory}):
%%%%%
\begin{itemize}
\item The initial state $\m{X}_0$ is Gaussian, distributed $\mathcal{N}(0,\sigma_0^2\mathbb{I}_m)$, where $\mathbb{I}_m$ is the identity matrix of size $m\times m$. 
\item The state transition functions describe the state evolution with time. The state transitions are linear: 
\begin{eqnarray*}
\m{X}_1 &=&\m{X}_0+\m{U}_1,\;\;\;\text {and}\\
\m{X}_2 &=&\m{X}_1-\m{U}_2.
\end{eqnarray*}
\item The outputs observed by the controllers: 
\begin{eqnarray}
\nonumber\m{Y}_1&=&\m{X}_0,\;\;\;\text{ and}\\
\m{Y}_2&=&\m{X}_1+\m{Z},
\label{eq:outputs}
\end{eqnarray}
where $\m{Z}\sim \mathcal{N}(0,\sigma_Z^2\mathbb{I}_m)$ is Gaussian distributed observation noise.  
\item The control objective is to minimize the expected cost, averaged over the  random realizations of $\m{X}_0$ and $\m{Z}$. The total cost is a quadratic function of the state and the input given by the sum of two terms:
\begin{eqnarray*}
J_1(\m{x}_1,\m{u}_1) &=& \frac{1}{m}k^2\|\m{u}_1\|^2,\; \text{and}\\
J_2(\m{x}_2,\m{u}_2)&=&\frac{1}{m}\|\m{x}_2\|^2
\end{eqnarray*}
where $\|\cdot\|$ denotes the usual Euclidean 2-norm.
The cost expressions are normalized by the vector-length $m$ to allow for natural comparisons between different vector-lengths. A control strategy is denoted by $\gamma=(\gamma_1,\gamma_2)$, where $\gamma_i$ is the function that maps the observation $\m{y}_i$ at $\co{i}$ to the control input $\m{u}_i$. For a fixed $\gamma$, $\m{x}_1=\m{x}_0+\gamma_1(\m{x}_0)$ is a function of $\m{x}_0$. Thus the first stage cost can instead be written as a function $J_1^{(\gamma)}(\m{x}_0)=J_1(\m{x}_0+\gamma_1(\m{x}_0),\gamma_1(\m{x}_0))$ and the second stage cost can be written as $J_2^{(\gamma)}(\m{x}_0,\m{z})=J_2(\m{x}_0+\gamma_1(\m{x}_0)-\gamma_2(\m{x}_0+\gamma_1(\m{x}_0)+\m{z}),\gamma_2(\m{x}_0+\gamma_1(\m{x}_0)+\m{z}))$.

For given $\gamma$, the expected costs (averaged over $\m{x}_0$ and $\m{z}$) are denoted by $\bar{J}^{(\gamma)}(m,k^2,\sigma_0^2)$ and $\bar{J}_i^{(\gamma)}(m,k^2,\sigma_0^2)$ for $i=1,2$. We define $\bar{J}^{(\gamma)}_{\min}(m,k^2,\sigma_0^2)$ as follows
\begin{equation}
\bar{J}_{\min}(m,k^2,\sigma_0^2):=\inf_{\gamma}\bar{J}^{(\gamma)}(m,k^2,\sigma_0^2).
\end{equation}
%For notational simplicity, we drop the arguments of the various cost functions when there is no ambiguity in doing so.
%\item The \textit{information pattern} represents the information available to each controller at the time it takes an action (it has implicitly been specified above). Following Witsenhausen's  notation in~\cite{WitsenhausenPatterns}, the information pattern for the vector problem is
%\begin{eqnarray*}
%\mathcal{Y}_1 &=& \{\m{y}_1\};\;\mathcal{U}_1=\varnothing ,\\
%\mathcal{Y}_2 &=& \{\m{y}_2\};\;\mathcal{U}_2=\varnothing .
%\end{eqnarray*}
%Here $\mathcal{Y}_i$ denotes the information about the outputs
%in~\eqref{eq:outputs} available at the controller
%$i\in\{1,2\}$. Similarly, $\mathcal{U}_i$ denotes the information about the
%previously applied inputs available at the $i$-th controller. 

%Note that the second controller does not have knowledge of the output
%observed or the input applied at the first stage. This makes the
%information pattern non-classical (and non-nested), and the problem
%distributed.  
\end{itemize}
We note that for the scalar case of $m=1$, the problem is Witsenhausen's original counterexample~\cite{Witsenhausen68}.

%%%%%
 
Observe that scaling $\sigma_0$ and $\sigma_Z$ by the same factor essentially does not change the problem --- the solution can also be scaled by the same factor (with the resulting cost scaling quadratically with it). Thus, without loss of generality, we assume that the variance of the Gaussian observation noise is $\sigma_Z^2=1$  (as is also assumed in~\cite{Witsenhausen68}). The pdf of the noise $\m{Z}$ is denoted by $f_Z(\cdot{})$. In our proof techniques, we also consider a hypothetical observation noise $\m{Z}_G\sim\mathcal{N}(0,\sigma_G^2)$ with the variance $\sigma_G^2\geq 1$. The pdf of this test noise is denoted by $f_G(\cdot{})$. We use $\psi(m,r)$ to denote $\Pr(\|\m{Z}\|\geq r)$ for $\m{Z}\sim\mathcal{N}(0,\mathbb{I})$. 

Subscripts in expectation expressions denote the random variable being averaged over (e.g. $\expectp{\m{X}_0,\m{Z}_G}{\cdot{}}$ denotes averaging over the initial state $\m{X}_0$ and the test noise $\m{Z}_G$). 
%$\psi(m,r)$ can be calculated using the following expression
%\begin{equation}
%\psi(m,r) = \frac{\gamma_{up,inc}(m/2,r^2/2)}{\Gamma(m/2)} = \frac{\int_{r^2/2}^\infty (t/2)^{\frac{m}{2}-1} e^{-t/2}dt}{\Gamma(m/2)},
%\end{equation}
%where $\gamma_{up,inc}(\cdot{},\cdot{})$ is the upper incomplete gamma function, and $\Gamma(\cdot{})$ is the usual Gamma function. 

%We use $J_i^{(\gamma)}(\m{x}_0,\m{z})$ to denote cost at the $i$-th stage (for $i=1,2$) for a given strategy $\gamma$ for initial state $\m{x}_0$ and observation
%noise realization $\m{z}$. $\bar{J}^{(\gamma)}(m,k^2,\sigma_0^2)$ denotes the total average cost
%attained by a given strategy, and $\bar{J}_{min}(m,k^2,\sigma_0^2):=\min_{\gamma }\bar{J}^{(\gamma)}(m,k^2,\sigma_0^2)$ denotes the minimum total cost achievable over all possible strategies for given $m,k^2$ and $\sigma_0^2$. 

%%%%%%%%%%%%%%%%%%%%%%%%%%%%%%%%%%%%%%%%%%%%%%%%%%%%%%%%%%%%%%%%%%%%%%%%%%%%%%%%
\section{Lattice-based quantization strategies}
\label{sec:lattice}
\begin{figure}
\begin{center}
\includegraphics[width=9cm]{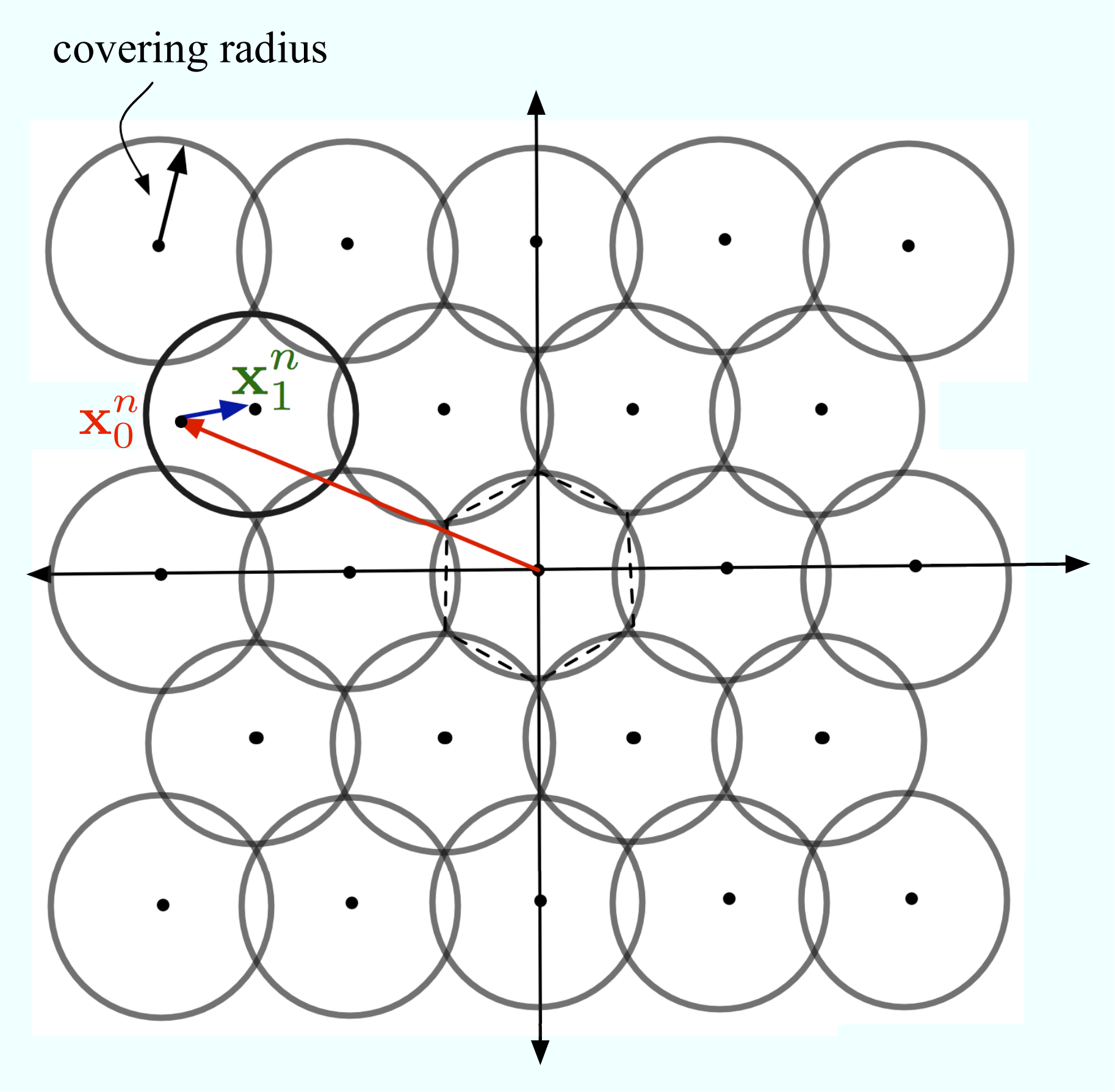}
\includegraphics[width=9cm]{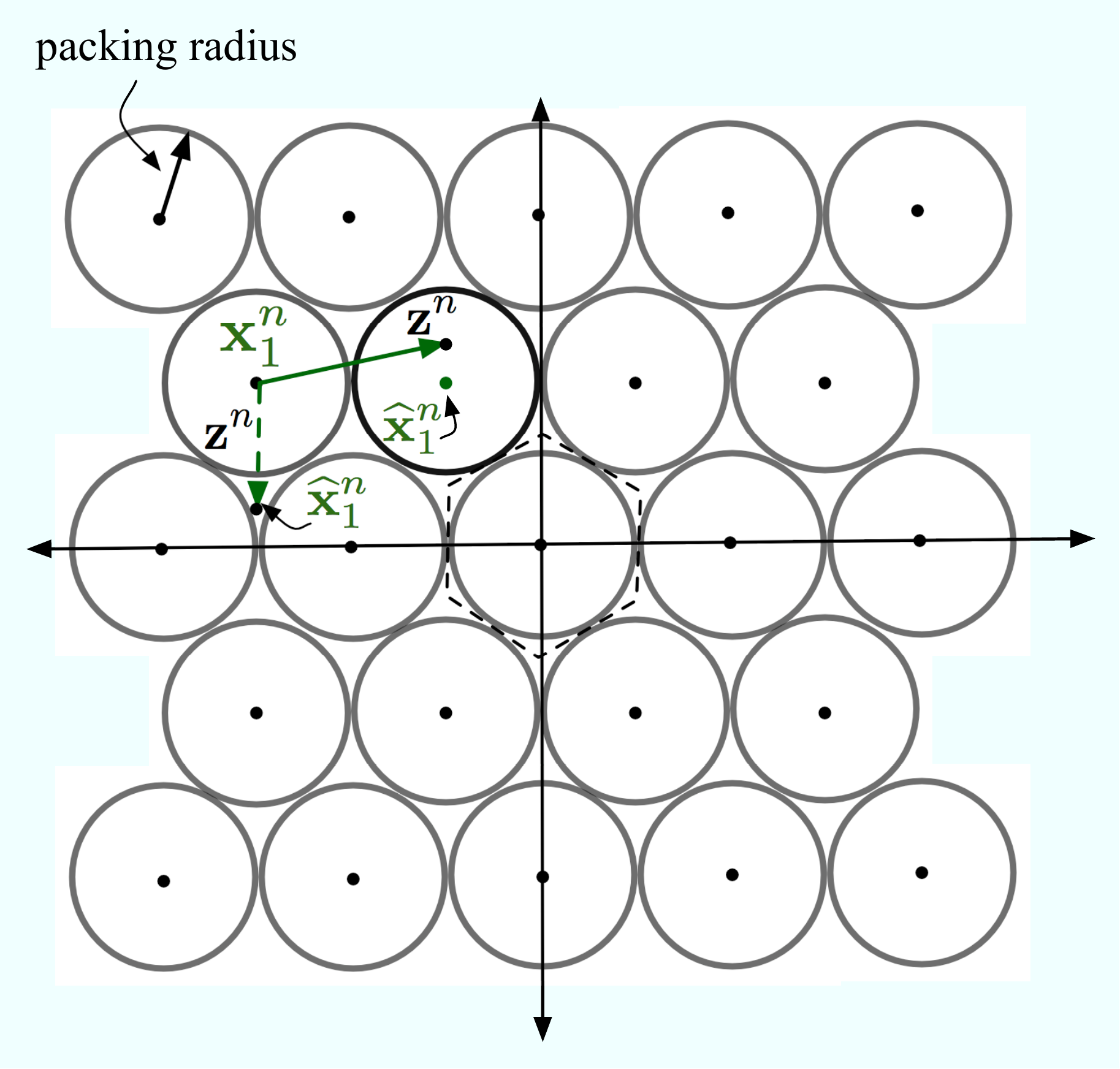}
\end{center}
\caption{Covering and packing for the 2-dimensional hexagonal
  lattice. The packing-covering ratio for this lattice is $\xi=\frac{2}{\sqrt{3}}
  \approx 1.15$~\cite[Appendix C]{Fischer}. The first controller
  forces the initial state $\m{x}_0$ to the lattice point nearest to
  it. The second controller estimates $\m{\widehat{x}}_1$ to be a
  lattice point at the centre of the sphere if it falls in one of the
  packing spheres. Else it essentially gives up and estimates
  $\m{\widehat{x}}_1=\m{y}_2$, the received output itself. A
  hexagonal lattice-based scheme would perform better for the 2-D
  Witsenhausen problem than the square lattice (of
  $\xi=\sqrt{2}\approx1.41$~\cite[Appendix C]{Fischer}) because it has
  a smaller $\xi$.}
\label{fig:lattice}
\end{figure}

Lattice-based quantization strategies are the natural generalizations of scalar quantization-based strategies~\cite{AreaExamPaper}. An introduction to lattices can be found
in~\cite{SloaneLattices,almosteverything}. Relevant definitions are
reviewed below. $\mathcal{B}$ denotes the unit ball in $\mathbb{R}^m$.
\begin{definition}[Lattice]
An $m$-dimensional lattice $\Lambda$ is a set of points in $\mathbb{R}^m$ such that if $\m{x},\m{y}\in\Lambda$, then $\m{x}+\m{y}\in \Lambda$, and if $\m{x}\in\Lambda$, then $-\m{x}\in\Lambda$.
\end{definition}

\begin{definition}[Packing and packing radius]
Given an $m$-dimensional lattice $\Lambda$ and a radius $r$, the set
$\Lambda+r\mathcal{B}$ is a \textit{packing} of Euclidean $m$-space if
for all points $\m{x},\m{y}\in\Lambda$,
$(\m{x}+r\mathcal{B})\bigcap(\m{y}+r\mathcal{B})=\emptyset$. The packing
radius $r_p$ is defined as $r_p:=\sup\{r:\Lambda+ r\mathcal{B}
\;\text{is a packing} \}$.  
\end{definition}

\begin{definition}[Covering and covering radius]
Given an $m$-dimensional lattice $\Lambda$ and a radius $r$, the set
$\Lambda+r\mathcal{B}$ is a \textit{covering} of Euclidean $m$-space
if $\mathbb{R}^m\subseteq \Lambda + r\mathcal{B}$. The covering radius
$r_c$ is defined as $r_c:=\inf\{r:\Lambda+ r\mathcal{B} \;\text{is a
  covering} \}$.  
\end{definition}

\begin{definition}[Packing-covering ratio]
The \textit{packing-covering ratio} (denoted by $\xi$) of a lattice
$\Lambda$ is the ratio of its covering radius to its packing radius,
$\xi=\frac{r_c}{r_p}$. 
\end{definition}

Because it creates no ambiguity, we do not include the dimension $m$
and the choice of lattice $\Lambda$ in the notation of $r_c$, $r_p$
and $\xi$, though these quantities depend on $m$ and
$\Lambda$.   

%We are interested in \textit{almost-perfect} lattices, i.e., lattices that haIt has been shown~\cite{Butler,almosteverything} that in the limit $n\rightarrow\infty$, a packing-covering ratio $\xi=2$ is achievable. Also, a recursive construction in~\cite[Theorem 8.18]{Micciancio} provides lattices for all vector lengths $m$ such that the packing-covering ratio is less than $4$. The ratio is lower bounded by $1$, and this lower bound is achieved for $m=1$, but for no finite $n>1$. 

For a given dimension $m$, a natural control strategy that uses a lattice $\Lambda$ of covering radius $r_c$ and packing radius $r_p$ is as follows. The first controller uses the input $\m{u}_1$ to force the state $\m{x}_0$ to the lattice point nearest to $\m{x}_0$. The second controller estimates $\m{x}_1$ to be the lattice point nearest to $\m{y}_2$. For analytical ease, we instead consider an inferior strategy where the second controller estimates $\m{x}_1$ to be a lattice point only if the lattice point lies within the sphere of radius $r_p$ around $\m{y}_2$. If no lattice point exists in the sphere, the second controller estimates $\m{x}_1$ to be $\m{y}_2$, the received vector itself. The actions $\gamma_1(\cdot{})$ of $\co{1}$ and $\gamma_2(\cdot{})$ of $\co{2}$ are therefore given by
\begin{eqnarray*}
\gamma_1(\m{x}_0) &=& -\m{x}_0+\underset{{\m{x}_1}\in\Lambda }{\text{arg min}}\;\|\m{x}_1-\m{x}_0\|^2,\\
\gamma_2(\m{y}_2) &=& \left\{\begin{array}{cc}
\m{\widetilde{x}}_1 & \text{if}\;\exists\;  \m{\widetilde{x}}_1\in\Lambda\;\text{s.t.}\; \|\m{y}_2-\m{\widetilde{x}}_1\|^2< r_p^2\\
\m{y}_2 & \text{otherwise}
\end{array}\right. .
\end{eqnarray*}
The event where there exists no such $\m{\widetilde{x}}_1\in\Lambda$ is referred to as \textit{decoding failure}. In the following, we denote $\gamma_2(\m{y}_2)$ by $\m{\widehat{x}}_1$, the estimate of $\m{x}_1$. 

%%%%%
\begin{theorem}
\label{thm:upperbound}
Using a lattice-based strategy (as described above) for $W(m,k^2,\sigma_0^2)$  with $r_c$ and $r_p$ the covering and the packing radius for the lattice, the total average cost is upper bounded by
\begin{eqnarray*}
\bar{J}^{(\gamma)}(m,k^2,\sigma_0^2)\leq \inf_{P\geq 0} k^2P + 
\left(\sqrt{\psi(m+2,r_p)}+\sqrt{\frac{P}{\xi^2}}\sqrt{\psi(m,r_p)}\right)^2,
\end{eqnarray*}
where $\xi=\frac{r_c}{r_p}$ is the packing-covering ratio for the 
lattice, and $\psi(m,r)=\Pr(\|\m{Z}\|\geq r)$. The following looser bound also holds
\begin{eqnarray*}
\bar{J}^{(\gamma)}(m,k^2,\sigma_0^2)
\leq \inf_{P> \xi^2} 
k^2P+\left(1+\sqrt{\frac{P}{\xi^2}}\right)^2  
e^{-\frac{mP}{2\xi^2}+\frac{m+2}{2}\left(1+\lon{\frac{P}{\xi^2}}\right)}.
\end{eqnarray*}
\end{theorem}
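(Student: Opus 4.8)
The plan is to split the average cost as $\bar{J}^{(\gamma)}=\bar{J}_1^{(\gamma)}+\bar{J}_2^{(\gamma)}$ and bound the two stages separately, letting the free parameter $P$ record the scale of the lattice. First I would dispatch the first stage deterministically: since $\m{U}_1$ forces $\m{X}_0$ to its nearest lattice point $\m{X}_1$, the quantization error $\|\m{U}_1\|=\|\m{X}_1-\m{X}_0\|$ never exceeds the covering radius $r_c$, so $\bar{J}_1^{(\gamma)}=\frac{k^2}{m}\expect{\|\m{U}_1\|^2}\leq \frac{k^2r_c^2}{m}$. Setting $P:=r_c^2/m$ gives $\bar{J}_1^{(\gamma)}\leq k^2P$, and since scaling the lattice leaves $\xi$ fixed, it fixes $r_p=\sqrt{mP}/\xi$, i.e. $r_p/\sqrt{m}=\sqrt{P/\xi^2}$; the outer $\inf_{P\geq0}$ is then exactly the optimization over the lattice scale. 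A key simplification is that because $\m{X}_1\in\Lambda$ and the decoding rule is invariant under lattice translations, the second-stage residual $\m{X}_2=\m{X}_1-\m{\widehat{X}}_1$ depends on $\m{Z}$ alone (not on $\m{X}_0$), so $\bar{J}_2^{(\gamma)}=\frac1m\expect{\|\m{X}_2\|^2}$ is an expectation over the noise only.

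For the second stage I would classify the noise into three outcomes. If $\|\m{Z}\|<r_p$ then $\m{X}_1$ itself lies within $r_p$ of $\m{Y}_2$ and, by the packing property, is the unique such lattice point, so decoding is correct and $\m{X}_2=\mathbf{0}$. If $\|\m{Z}\|\geq r_p$, either no lattice point lies within $r_p$ of $\m{Y}_2$ (\emph{decoding failure}, $\m{\widehat{X}}_1=\m{Y}_2$, hence $\m{X}_2=-\m{Z}$) or a wrong lattice point $\m{\widetilde{X}}_1$ does (\emph{decoding error}, $\m{X}_2=\m{X}_1-\m{\widetilde{X}}_1$). The crucial bookkeeping step is the exact decomposition
\[
\m{X}_2 = -\m{Z}\,\indi{\|\m{Z}\|\geq r_p} + (\m{Y}_2-\m{\widetilde{X}}_1)\,\indi{\text{error}},
\]
which one checks case by case; note that on the error event the decoding rule forces $\|\m{Y}_2-\m{\widetilde{X}}_1\|<r_p$, so the second term is bounded pointwise by $r_p$.

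Applying Minkowski's triangle inequality in $L^2$ to this sum yields $\sqrt{\expect{\|\m{X}_2\|^2}}\leq \sqrt{\expect{\|\m{Z}\|^2\indi{\|\m{Z}\|\geq r_p}}}+\sqrt{\expect{\|\m{Y}_2-\m{\widetilde{X}}_1\|^2\indi{\text{error}}}}$, which is precisely the $(\sqrt{\cdot}+\sqrt{\cdot})$ structure of the claimed bound. The two pieces are then evaluated: for the first I would use the dimension-lifting identity $\expect{\|\m{Z}\|^2\indi{\|\m{Z}\|\geq r}}=m\,\psi(m+2,r)$, which follows from $t\,f_{\chi^2_m}(t)=m\,f_{\chi^2_{m+2}}(t)$; for the second, the pointwise bound $\|\m{Y}_2-\m{\widetilde{X}}_1\|<r_p$ together with the containment $\{\text{error}\}\subseteq\{\|\m{Z}\|\geq r_p\}$ gives $\expect{\|\m{Y}_2-\m{\widetilde{X}}_1\|^2\indi{\text{error}}}\leq r_p^2\,\psi(m,r_p)$. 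Dividing by $m$, squaring, and adding $\bar{J}_1^{(\gamma)}\leq k^2P$ produces the first (tight) bound, and taking the infimum over the scale $P$ completes it.

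For the looser bound I would first use the stochastic dominance $\psi(m,r_p)\leq\psi(m+2,r_p)$ (since $\chi^2_{m+2}=\chi^2_m+\chi^2_2$) to factor the bracket as $\psi(m+2,r_p)\big(1+\sqrt{P/\xi^2}\big)^2$, and then apply a Chernoff bound to the single tail $\psi(m+2,r_p)=\Pr(\chi^2_{m+2}\geq mP/\xi^2)$ with the parameter $s=\tfrac12(1-\xi^2/P)$ (legitimate precisely when $P>\xi^2$, which is the stated range); this yields $\psi(m+2,r_p)\leq e^{-\frac{mP}{2\xi^2}+\frac{m}{2}+\frac{m+2}{2}\lon{P/\xi^2}}$, and bounding $m/2\leq(m+2)/2$ gives the stated exponent. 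I expect the main obstacle to be the second-stage bookkeeping rather than any single estimate: one must not overlook that decoding can land on a \emph{wrong} lattice point, and the whole argument hinges on packaging failure and error into one clean sum to which Minkowski applies, with the error term controlled by the packing radius. The Gaussian tail identities are then routine.
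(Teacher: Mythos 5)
Your proof is correct and is essentially the paper's own argument: the first stage is handled by the covering radius with $P=r_c^2/m$, the second stage by an $L^2$ triangle inequality together with the identity $\expect{\|\m{Z}\|^2\indi{\|\m{Z}\|\geq r}}=m\,\psi(m+2,r)$ and the containment of the error/failure events in $\{\|\m{Z}\|\geq r_p\}$, and the looser form by a chi-squared Chernoff bound in the range $P>\xi^2$. The paper packages the second stage slightly differently --- it bounds $\|\m{x}_1-\m{\widehat{x}}_1\|$ pointwise by $\|\m{z}\|+r_p$ on the bad event $\mathcal{E}_m$ and then expands the square using Cauchy--Schwarz --- but this yields exactly the same expression as your exact decomposition of $\m{X}_2$ followed by Minkowski; your translation-invariance observation (that the residual depends on $\m{Z}$ alone) is a valid substitute for the paper's conditioning on $\m{X}_1$. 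One detail where your route is actually cleaner: for the looser bound you first factor out $\psi(m+2,r_p)$ via stochastic dominance and then apply a single Chernoff bound with the suboptimal exponent $s=\tfrac12\left(1-\xi^2/P\right)$, which is legitimate for every $P>\xi^2$; the paper instead separately Chernoff-bounds $\psi(m+2,r_p)$ using the exponent optimized for $m+2$ degrees of freedom, which strictly speaking requires $mP/\xi^2>m+2$ rather than just $P>\xi^2$, so your argument closes a small corner case that the paper's proof glosses over.
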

\textit{Remark}: The latter loose bound is useful for analytical manipulations when proving explicit bounds on the ratio of the upper and lower bounds in Section~\ref{sec:ratio}.
%%%%%%%

\begin{proof}
Note that because $\Lambda$ has a covering radius of $r_c$, $\|\m{x}_1-\m{x}_0\|^2\leq r_c^2$. Thus the first stage cost is bounded above by $\frac{1}{m}k^2r_c^2$. A tighter bound can be provided for a specific lattice and finite $m$ (for example, for $m=1$, the first stage cost is approximately $k^2\frac{r_c^2}{3}$ if $r_c^2 \ll \sigma_0^2$ because the distribution of $\m{x}_0$ conditioned on it lying in any of the quantization bins is approximately uniform at least for the most likely bins). 

For the second stage, observe that
\begin{eqnarray}
\label{eq:eachpoint}
\expectp{\m{X}_1,\m{Z}}{\|\m{X}_1-\m{\widehat{X}}_1\|^2}=\expectp{\m{X}_1}{\expectp{\m{Z}}{\|\m{X}_1-\m{\widehat{X}}_1\|^2|\m{X}_1}}. 
\end{eqnarray}
Denote by $\mathcal{E}_m$ the event $\{\|\m{Z}\|^2\geq r_p^2\}$. Observe that under the event $\mathcal{E}_m^c$, $\m{\widehat{X}}_1=\m{X}_1$, resulting in a zero second-stage cost. Thus,
\begin{eqnarray*}
\expectp{\m{Z}}{\|\m{X}_1-\m{\widehat{X}}_1\|^2|\m{X}_1}&=&\expectp{\m{Z}}{\|\m{X}_1-\m{\widehat{X}}_1\|^2\indi{\mathcal{E}_m}|\m{X}_1}+\expectp{\m{Z}}{\|\m{X}_1-\m{\widehat{X}}_1\|^2\indi{\mathcal{E}_m^c}|\m{X}_1}\\
&=&\expectp{\m{Z}}{\|\m{X}_1-\m{\widehat{X}}_1\|^2\indi{\mathcal{E}_m}|\m{X}_1}.
\end{eqnarray*}
We now bound the squared-error under the error event $\mathcal{E}_m$, when either $\m{x}_1$ is decoded erroneously, or there is a decoding failure. If $\m{x}_1$ is decoded erroneously to a lattice point $\m{\widetilde{x}}_1\neq \m{x}_1$, the squared-error can be bounded as follows
\begin{eqnarray*}
\|\m{x}_1-\m{\widetilde{x}}_1\|^2  = \|\m{x}_1-\m{y}_2+\m{y}_2- \m{\widetilde{x}}_1\|^2
\leq  \left(\|\m{x}_1-\m{y}_2\|+\|\m{y}_2- \m{\widetilde{x}}_1\|\right)^2
\leq  \left(\|\m{z}\|+r_p\right)^2.
\end{eqnarray*}
If $\m{x}_1$ is decoded as $\m{y}_2$, the squared-error is simply
$\|\m{z}\|^2$, which we also upper bound by
$\left(\|\m{z}\|+r_p\right)^2$. Thus, under event $\mathcal{E}_m$,
the squared error $\|\m{x}_1-\m{\widehat{x}}_1\|^2$ is bounded above
by $\left(\|\m{z}\|+r_p\right)^2$, and hence
\begin{eqnarray}
\label{eq:andhence}
\expectp{\m{Z}}{\|\m{X}_1-\m{\widehat{X}}_1\|^2|\m{X}_1}&\leq& \expectp{\m{Z}}{\left(\|\m{Z}\|+r_p\right)^2\indi{\mathcal{E}_m}|\m{X}_1}\nonumber\\
&\overset{(a)}{=}&\expectp{\m{Z}}{\left(\|\m{Z}\|+r_p\right)^2\indi{\mathcal{E}_m}},
\end{eqnarray}
where $(a)$ uses the fact that the pair $(\m{Z},\indi{\mathcal{E}_m})$ is independent of $\m{X}_1$. Now, let $P=\frac{r_c^2}{m}$, so that the first stage cost is at most
$k^2P$. The following lemma helps us derive the upper bound. 
\begin{lemma}
\label{lem:upperbound}
For a given lattice with $r_p^2=\frac{r_c^2}{\xi^2}=\frac{mP}{\xi^2}$, the following bound holds 
\begin{eqnarray*}
\frac{1}{m}\expectp{\m{Z}}{\left(\|\m{Z}\|+r_p\right)^2\indi{\mathcal{E}_m}}
\leq \left(\sqrt{\psi(m+2,r_p)}+\sqrt{\frac{P}{\xi^2}}\sqrt{\psi(m,r_p)}\right)^2.
\end{eqnarray*}
The following (looser) bound also holds as long as $P>\xi^2$,
\begin{eqnarray*}
\frac{1}{m}\expectp{\m{Z}}{\left(\|\m{Z}\|+r_p\right)^2\indi{\mathcal{E}_m}}\leq \left(1+\sqrt{\frac{P}{\xi^2}}\right)^2e^{-\frac{mP}{2\xi^2}+\frac{m+2}{2}\left(1+\lon{\frac{P}{\xi^2}}\right)}.
\end{eqnarray*}
\end{lemma}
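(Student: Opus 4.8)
The plan is to treat $\frac{1}{m}\expectp{\m{Z}}{(\|\m{Z}\|+r_p)^2\indi{\mathcal{E}_m}}$ as a squared $L^2$-norm and apply Minkowski's (triangle) inequality. Writing $(\|\m{Z}\|+r_p)\indi{\mathcal{E}_m}=\|\m{Z}\|\indi{\mathcal{E}_m}+r_p\indi{\mathcal{E}_m}$ and using $\indi{\mathcal{E}_m}^2=\indi{\mathcal{E}_m}$, the triangle inequality in $L^2(\Pr)$ gives
\[
\sqrt{\expectp{\m{Z}}{(\|\m{Z}\|+r_p)^2\indi{\mathcal{E}_m}}}
\leq \sqrt{\expectp{\m{Z}}{\|\m{Z}\|^2\indi{\mathcal{E}_m}}}+r_p\sqrt{\expectp{\m{Z}}{\indi{\mathcal{E}_m}}}.
\]
The last term is immediate, since $\expectp{\m{Z}}{\indi{\mathcal{E}_m}}=\Pr(\|\m{Z}\|\geq r_p)=\psi(m,r_p)$. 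Dividing by $\sqrt{m}$ and squaring will then produce the claimed form once the first term is identified, because $r_p/\sqrt{m}=\sqrt{r_p^2/m}=\sqrt{P/\xi^2}$ by the relation $r_p^2=mP/\xi^2$.

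The crux is to establish the truncated-second-moment identity $\expectp{\m{Z}}{\|\m{Z}\|^2\indi{\mathcal{E}_m}}=m\,\psi(m+2,r_p)$. I would prove this by passing to the chi-squared law: $\|\m{Z}\|^2\sim\chi^2_m$ with density $f_m$, together with the elementary recurrence $t f_m(t)=m f_{m+2}(t)$ (which follows from $\Gamma(\tfrac m2+1)=\tfrac m2\,\Gamma(\tfrac m2)$). This converts $\int_{r_p^2}^\infty t f_m(t)\,dt$ into $m\int_{r_p^2}^\infty f_{m+2}(t)\,dt=m\,\Pr(\chi^2_{m+2}\geq r_p^2)$. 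Substituting $\sqrt{\expectp{\m{Z}}{\|\m{Z}\|^2\indi{\mathcal{E}_m}}/m}=\sqrt{\psi(m+2,r_p)}$ into the Minkowski bound yields the first (tight) inequality directly.

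For the looser bound, which is invoked when $P>\xi^2$ (so $r_p^2>m$ and we sit in the large-deviation tail), I would first collapse the two $\psi$ terms by monotonicity of the tail in dimension: since $\|\m{Z}_{m+2}\|^2=\|\m{Z}_m\|^2+Z_{m+1}^2+Z_{m+2}^2\geq\|\m{Z}_m\|^2$, we have $\psi(m,r_p)\leq\psi(m+2,r_p)$, so the right-hand side of the first bound is at most $\left(1+\sqrt{P/\xi^2}\right)^2\psi(m+2,r_p)$. It then remains to bound $\psi(m+2,r_p)$ by a Chernoff estimate $\Pr(\chi^2_{m+2}\geq r_p^2)\leq e^{-sr_p^2}(1-2s)^{-(m+2)/2}$ for $0<s<\tfrac12$. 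Choosing the clean value $s=\tfrac12\left(1-\xi^2/P\right)$ (the exact optimizer would use dimension $m+2$, but this slightly suboptimal choice keeps the algebra transparent and can only weaken the bound), the identity $\xi^2 r_p^2/P=m$ makes $-sr_p^2=-\tfrac{mP}{2\xi^2}+\tfrac m2$ and $(1-2s)^{-(m+2)/2}=e^{\frac{m+2}{2}\lon{\frac{P}{\xi^2}}}$. Because $\tfrac m2\leq\tfrac{m+2}{2}$, the exponent is dominated by $-\tfrac{mP}{2\xi^2}+\tfrac{m+2}{2}\left(1+\lon{\frac{P}{\xi^2}}\right)$, which is precisely the stated exponent.

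The main obstacle I anticipate is the truncated-second-moment identity: getting the dimension shift $m\to m+2$ exactly right (an equality, not merely an inequality) is what makes the first bound tight and aligns the indices so that the looser bound inherits the $(m+2)$-dimensional tail. The Minkowski step and the Chernoff tail estimate are standard; the care needed lies purely in the chi-squared bookkeeping and in checking that the suboptimal $s$ still delivers the advertised exponent rather than a worse one.
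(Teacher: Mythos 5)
Your proof is correct, and its first half is essentially the paper's: your Minkowski step is the same estimate the paper obtains by expanding the square and applying Cauchy--Schwarz to the cross term, and your truncated-second-moment identity $\expectp{\m{Z}}{\|\m{Z}\|^2\indi{\mathcal{E}_m}}=m\,\psi(m+2,r_p)$ is exactly the paper's~\eqref{eq:psinplus2}, which it proves by integrating over spherical shells with the surface-area formula where you use the chi-squared density recurrence $tf_m(t)=mf_{m+2}(t)$ --- equivalent bookkeeping, as you anticipated. The genuine divergence is in the looser bound, and there your route is logically tighter than the paper's. The paper Chernoff-bounds the two tails $\psi(m,r_p)$ and $\psi(m+2,r_p)$ separately, each at its own optimizing parameter; for the $(m+2)$-dimensional tail that optimizer is $\rho=\frac{1}{2}\left(1-\frac{(m+2)\xi^2}{mP}\right)$, which lies in $\left(0,\frac{1}{2}\right)$ only when $P>\xi^2\left(1+\frac{2}{m}\right)$, a strictly stronger requirement than the stated hypothesis $P>\xi^2$. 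In fact the paper's intermediate inequality~\eqref{eq:psin2rp} can fail in the excluded range: for $m=1$ and $P=1.2\xi^2$ it asserts $\Pr(\chi^2_3\geq 1.2)\leq e^{0.9+1.5\lon{0.4}}\approx 0.62$, whereas the true value is $\approx 0.75$. Your argument --- collapse both tails to $\psi(m+2,r_p)$ by the coupling monotonicity $\psi(m,r)\leq\psi(m+2,r)$, then apply a single Chernoff bound at the deliberately suboptimal $s=\frac{1}{2}\left(1-\xi^2/P\right)$, which is admissible for every $P>\xi^2$, and finish with $\frac{m}{2}\leq\frac{m+2}{2}$ --- delivers exactly the stated exponent under exactly the stated hypothesis, so it both proves the lemma and quietly repairs this small gap in the paper's own derivation.
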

\begin{proof}
See Appendix~\ref{app:upperbound}.
\end{proof}
The theorem now follows from~\eqref{eq:eachpoint},~\eqref{eq:andhence} and Lemma~\ref{lem:upperbound}.
\end{proof}

%%%%%%%%%%%%%%%%%%%%%%%%%%%%%%%%%%%%%%%%%%%%%%%%%%%%%%%%%%%%%%%%%%%%%%
\section{Lower bounds on the cost}
\label{sec:lowerbound}

\begin{figure}
\begin{center}
\includegraphics[width=9cm]{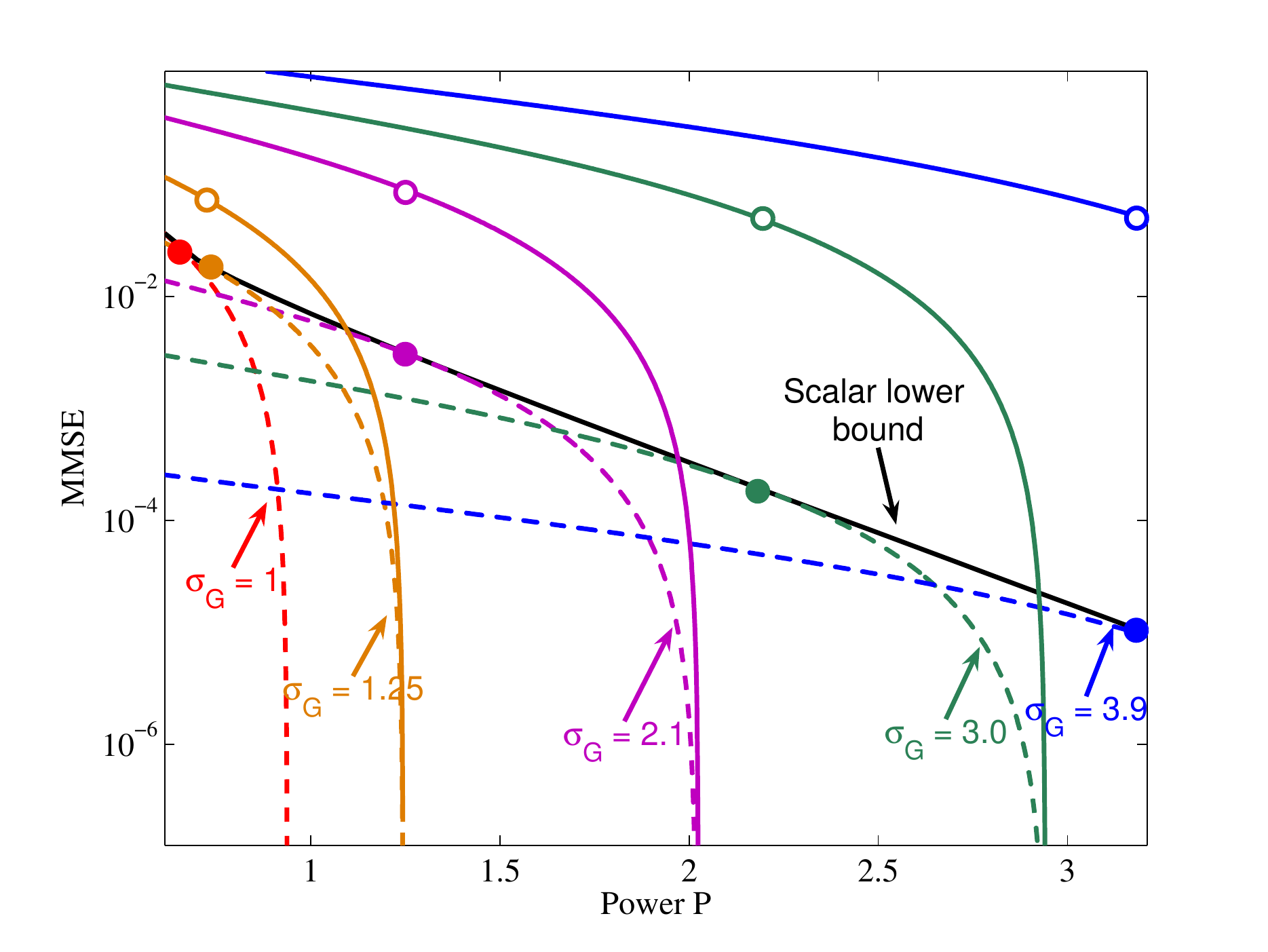}
\caption{A pictorial representation of the proof for the lower bound
  assuming $\sigma_0^2=30$. The solid curves show the vector lower bound
  of~\cite{WitsenhausenJournal} for various values of observation
  noise variances, denoted by $\sigma_G^2$. Conceptually, multiplying these curves by the
  probability of that channel behavior yields the shadow curves for the particular $\sigma_G^2$, shown by dashed curves. The scalar lower bound is
 then obtained by taking the maximum of these shadow curves.  The circles at points along the scalar bound curve indicate the
  optimizing value of $\sigma_G$ for obtaining that point on the
  bound.}
\label{fig:rainbow2}
\end{center}
\end{figure}

Bansal and Basar~\cite{bansalbasar} use information-theoretic techniques related to  rate-distortion and channel capacity to show the optimality of linear strategies in a modified version of Witsenhausen's counterexample where the cost function does not contain a product of two decision variables. Following the same spirit, in~\cite{WitsenhausenJournal} we derive the following lower bound for Witsenhausen's counterexample itself. 
\begin{theorem}
\label{thm:oldbound}
For $W(m,k^2,\sigma_0^2)$, if for a strategy $\gamma(\cdot{})$ the average power $\frac{1}{m}\expectp{\m{X}_0}{\|\m{U}_1\|^2}=P$, the following lower bound holds on the second stage cost 
\begin{equation*}
\bar{J}_2^{(\gamma)}(m,k^2,\sigma_0^2) \geq  \left(  \left(\sqrt{\kappa(P,\sigma_0^2)} - \sqrt{P}\right)^+   \right)^2,
\end{equation*}
where $(\cdot{})^+$ is shorthand for $\max(\cdot{}, 0)$ and 
\begin{equation}
\kappa(P,\sigma_0^2)=\frac{\sigma_0^2}{\sigma_0^2+P+2\sigma_0\sqrt{P}+1}.
\end{equation}
The following lower bound thus holds on the total cost
\begin{equation*}
\bar{J}^{(\gamma)}(m,k^2,\sigma_0^2) \geq \inf_{P\geq 0} k^2P + \left(  \left(\sqrt{\kappa(P,\sigma_0^2)} - \sqrt{P}\right)^+   \right)^2.
\end{equation*}
\end{theorem}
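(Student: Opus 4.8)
The plan is to bound the second-stage cost by sandwiching a single mutual-information quantity, $I(\m{X}_0;\m{Y}_2)$, between a rate-distortion lower bound and a channel-capacity upper bound, exactly in the Bansal--Basar spirit alluded to above. The key idea is to route the argument through the \emph{Gaussian} initial state $\m{X}_0$ rather than through $\m{X}_1$, whose distribution is uncontrolled: although the second controller nominally estimates $\m{X}_1$, its estimate $\whatmn{X}_1=\gamma_2(\m{Y}_2)$ doubles as an estimate of $\m{X}_0$, because $\m{X}_1$ is within per-dimension power $P$ of $\m{X}_0$ by hypothesis.

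Concretely, write $D:=\bar{J}_2^{(\gamma)}(m,k^2,\sigma_0^2)=\frac1m\expectp{}{\|\m{X}_1-\whatmn{X}_1\|^2}$. First I would use the Minkowski (triangle) inequality in $L^2$ to show that $\whatmn{X}_1$ estimates $\m{X}_0$ with per-dimension distortion at most $(\sqrt P+\sqrt D)^2$, since $\frac1m\expectp{}{\|\m{X}_0-\m{X}_1\|^2}=P$. Next, because $\m{X}_0\to\m{Y}_2\to\whatmn{X}_1$ is a Markov chain, the data-processing inequality together with the rate-distortion converse for the i.i.d.\ Gaussian source $\m{X}_0\sim\mathcal N(0,\sigma_0^2\mathbb I_m)$ gives
\begin{equation*}
I(\m{X}_0;\m{Y}_2)\ \geq\ I(\m{X}_0;\whatmn{X}_1)\ \geq\ \frac m2\lon{\frac{\sigma_0^2}{(\sqrt P+\sqrt D)^2}}.
\end{equation*}
For the matching upper bound I would use data processing along the other chain $\m{X}_0\to\m{X}_1\to\m{Y}_2$, so that $I(\m{X}_0;\m{Y}_2)\leq I(\m{X}_1;\m{Y}_2)=h(\m{Y}_2)-h(\m{Z})$. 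A second Minkowski inequality bounds the per-dimension power of $\m{X}_1=\m{X}_0+\m{U}_1$ by $(\sigma_0+\sqrt P)^2$, hence that of $\m{Y}_2$ by $(\sigma_0+\sqrt P)^2+1$; the Gaussian maximum-entropy bound then yields
\begin{equation*}
I(\m{X}_1;\m{Y}_2)\ \leq\ \frac m2\lon{(\sigma_0+\sqrt P)^2+1}.
\end{equation*}

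Chaining the two bounds and cancelling the common factor $\frac m2$ (which is why the resulting statement is dimension-free) gives $(\sqrt P+\sqrt D)^2\geq \sigma_0^2/\big((\sigma_0+\sqrt P)^2+1\big)=\kappa(P,\sigma_0^2)$, i.e.\ $\sqrt D\geq \sqrt{\kappa(P,\sigma_0^2)}-\sqrt P$. Taking the nonnegative part (legitimate since $D\geq0$) and squaring produces the claimed second-stage bound. The total-cost bound then follows by adding the first-stage cost $\bar J_1^{(\gamma)}=k^2P$ and observing that the inequality holds for whatever power $P$ the strategy happens to use, so infimizing the right-hand side over $P\geq0$ yields a valid lower bound for every $\gamma$.

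The main obstacle I anticipate is not any single inequality but keeping the two data-processing steps pointed in opposite directions while feeding each the correct distortion and power budget. In particular, the upper bound must not naively add the variances of $\m{X}_0$ and $\m{U}_1$, since $\m{U}_1=\gamma_1(\m{X}_0)$ is an arbitrary and possibly adversarially correlated function of $\m{X}_0$; the Minkowski inequality is essential to control the power of $\m{X}_1$, and this is precisely where the $(\sigma_0+\sqrt P)^2$ term, and hence the exact form of $\kappa$, originates. A minor point to dispatch is the edge case $(\sqrt P+\sqrt D)^2>\sigma_0^2$, where the rate-distortion function vanishes; there one checks directly that $\sqrt P+\sqrt D>\sigma_0>\sqrt{\kappa(P,\sigma_0^2)}$, so the final inequality holds trivially and the $(\cdot)^+$ is the active constraint.
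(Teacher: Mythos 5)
Your proposal is correct and follows essentially the same route as the paper: the $L^2$ triangle inequality to convert the second-stage cost into a distortion on $\m{X}_0$, followed by the information-theoretic converse for the induced Gaussian joint source--channel problem with input power $(\sigma_0+\sqrt{P})^2$, which yields exactly $\kappa(P,\sigma_0^2)$. The paper's sketch merely cites the optimality of uncoded transmission (Goblick) for that converse step, while its Appendix~\ref{app:ep0g} (proof of Lemma~\ref{lem:epg}) carries out precisely your data-processing/rate-distortion/maximum-entropy sandwich in the generalized truncated-noise setting.
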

\begin{proof}
We refer the reader to~\cite{WitsenhausenJournal} for the full proof. We outline it here because these ideas are used in the derivation of the new lower bound in Theorem~\ref{thm:newbound}. 

Using a triangle inequality argument, we show 
\begin{eqnarray}
\sqrt{\frac{1}{m}\expectp{\m{X}_0,\m{Z}}{\|\m{X}_0-\whatmn{X}_1\|^2}}\leq \sqrt{\frac{1}{m}\expectp{\m{X}_0,\m{Z}}{\|\m{X}_0-\m{X}_1\|^2}}+\sqrt{\frac{1}{m}\expectp{\m{X}_0,\m{Z}}{\|\m{X}_1 - \whatmn{X}_1\|^2}}.
\label{eq:triangle1}
\end{eqnarray}
The first term on the RHS is $\sqrt{P}$. It therefore suffices to lower bound the term on the LHS to obtain a lower bound on  $\expectp{\m{X}_0,\m{Z}}{\|\m{X}_1 - \whatmn{X}_1\|^2}$. To that end, we interpret $\whatmn{X}_1$ as an estimate for $\m{X}_0$, which is a problem of transmitting a source across a channel. For an iid Gaussian source to be transmitted across  a memoryless power-constrained additive-noise Gaussian channel (with one channel use per source symbol), the optimal strategy that minimizes the mean-square error is merely scaling the source symbol so that the average power constraint is met~\cite{GoblickUncoded}. The estimation at the second controller is then merely the linear MMSE estimation of $\m{X}_0$, and the obtained MMSE is $\kappa(P,\sigma_0^2)$. The lemma now follows from~\eqref{eq:triangle1}.
\end{proof}
%The techniques of~\cite{WitsenhausenJournal} do not yield a tight bound even asymptotically because there is a tension in the Gaussianity of $\m{x}_1$. On one hand, aligning $\m{u}_1$ with $\m{x}_0$ yields a large power Gaussian distribution on $\m{x}_1$ that maximizes the capacity of the implicit channel, potentially reducing the estimation error. On the other, large power Gaussian sources are also the hardest to estimate across a Gaussian channel, and thus non-Gaussian (probably discrete) distributions on $\m{x}_1$ might perform better. Our bounding technique ignores this tension.

Observe that the lower bound expression is the same for all vector
lengths. In the following, large-deviation arguments~\cite{BlahutThesis,CsiszarKorner} (called sphere-packing style arguments for historical reasons) are extended following~\cite{PinskerNoFeedback,OurUpperBoundPaper,waterslide} to a joint source-channel setting where the distortion measure is unbounded. The obtained bounds are tighter than those in Theorem~\ref{thm:oldbound} and depend explicitly on the  vector length $m$.
\begin{theorem}
\label{thm:newbound}
For $W(m,k^2,\sigma_0^2)$,  if for a strategy $\gamma(\cdot{})$ the average power $\frac{1}{m}\expectp{\m{X}_0}{\|\m{U}_1\|^2}=P$, the following lower bound holds on the second stage cost for any choice of $\sigma_G^2\geq 1$ and $L>0$
\begin{equation*}
\bar{J}_2^{(\gamma)}(m,k^2,\sigma_0^2) \geq \eta(P,\sigma_0^2,\sigma_G^2,L).
\end{equation*}
where 
\begin{eqnarray*}
\eta(P,\sigma_0^2,\sigma_G^2,L)=\frac{\sigma_G^m}{c_m(L)}\exp\left(-\frac{mL^2(\sigma_G^2-1)}{2}\right)\left(  \left(\sqrt{\kappa_2(P,\sigma_0^2,\sigma_G^2,L)} - \sqrt{P}\right)^+   \right)^2,
\end{eqnarray*}
where $\kappa_2(P,\sigma_0^2,\sigma_G^2,L):=$
\begin{eqnarray*}
\frac{\sigma_0^2\sigma_G^2}{c_m^{\frac{2}{m}}(L)e^{1-d_m(L)}\left((\sigma_0+\sqrt{P})^2+d_m(L)\sigma_G^2\right)},
\end{eqnarray*}
 $c_m(L):=\frac{1}{\Pr(\|\m{Z}\|^2\leq mL^2)}= \left(1-\psi(m,L\sqrt{m})\right)^{-1}$, 
$d_m(L):=\frac{\Pr(\|\mk{Z}{m+2}\|^2\leq mL^2)}{\Pr(\|\m{Z}\|^2\leq mL^2)} =
\frac{1-\psi(m+2,L\sqrt{m})}{1-\psi(m,L\sqrt{m})}$, \\$0< d_m(L)<1$, and 
$\psi(m,r)=\Pr(\|\m{Z}\|\geq r)$.
Thus the following lower bound holds on the total cost
\begin{equation}
\bar{J}_{\min}(m,k^2,\sigma_0^2) \geq \inf_{P\geq 0} k^2P + 
\eta(P,\sigma_0^2,\sigma_G^2,L),
\end{equation}
for any choice of $\sigma_G^2\geq 1$ and $L>0$ (the choice can depend on $P$). Further, these bounds are at least as tight as those of Theorem~\ref{thm:oldbound} for all values of $k$ and $\sigma_0^2$.
\end{theorem}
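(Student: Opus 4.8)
The plan is to reduce the claimed total-cost bound to a lower bound on the second-stage cost $\bar{J}_2^{(\gamma)}$ evaluated under a \emph{test channel}, and then transport that bound back to the true channel by a change of measure restricted to a typical noise set. Since $\m{X}_2=\m{X}_1-\whatmn{X}_1$, I first reinterpret $\whatmn{X}_1$ as an estimate of the source $\m{X}_0$, exactly as in Theorem~\ref{thm:oldbound}: by the triangle inequality (Minkowski) in $L^2$,
\[
\sqrt{\tfrac1m\expect{\|\m{X}_1-\whatmn{X}_1\|^2}}\geq \sqrt{\tfrac1m\expect{\|\m{X}_0-\whatmn{X}_1\|^2}}-\sqrt{P},
\]
so it suffices to lower bound the mean-squared error in estimating $\m{X}_0$ from $\m{Y}_2=\m{X}_1+\m{Z}$.

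The new ingredient is the change of measure. Fix the test noise $\m{Z}_G\sim\mathcal{N}(0,\sigma_G^2\mathbb{I}_m)$ and the set $\mathcal{A}:=\{\m{z}:\|\m{z}\|^2\leq mL^2\sigma_G^2\}$, chosen so that $\Pr(\m{Z}_G\in\mathcal{A})=\Pr(\|\m{Z}\|^2\leq mL^2)=1/c_m(L)$. On $\mathcal{A}$ the likelihood ratio $f_Z(\m{z})/f_G(\m{z})=\sigma_G^m\exp\bigl(-\tfrac12\|\m{z}\|^2(1-\sigma_G^{-2})\bigr)$ is minimized at the boundary, giving $f_Z/f_G\geq\sigma_G^m e^{-mL^2(\sigma_G^2-1)/2}$ throughout $\mathcal{A}$. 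Restricting the true-noise expectation to $\mathcal{A}$ and replacing $f_Z$ by $f_G$ then yields
\[
\bar{J}_2^{(\gamma)}\geq \sigma_G^m e^{-mL^2(\sigma_G^2-1)/2}\,\tfrac{1}{m}\expectp{\m{X}_0,\m{Z}_G}{\|\m{X}_1-\whatmn{X}_1\|^2\indi{\m{Z}_G\in\mathcal{A}}},
\]
and writing the restricted expectation as $\Pr(\m{Z}_G\in\mathcal{A})=1/c_m(L)$ times the conditional expectation exposes exactly the prefactor $\frac{\sigma_G^m}{c_m(L)}e^{-mL^2(\sigma_G^2-1)/2}$ of $\eta$.

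It remains to show the conditional MSE $\frac1m\expect{\|\m{X}_0-\whatmn{X}_1\|^2\mid \m{Z}_G\in\mathcal{A}}$ is at least $\kappa_2$; the $((\sqrt{\kappa_2}-\sqrt{P})^+)^2$ factor then follows by reapplying the triangle step under the conditional law (note $\m{X}_0,\m{X}_1$ are independent of the conditioning event, so the first-stage term is still $\sqrt{P}$). For the conditional MSE I would run a rate-distortion / data-processing argument along the conditioned Markov chain $\m{X}_0\to\m{X}_1\to\m{Y}_2\to\whatmn{X}_1$: writing $D$ for the conditional MSE and $\m{W}$ for the conditional noise $\m{Z}_G\mid\{\m{Z}_G\in\mathcal{A}\}$,
\[
\tfrac m2\lon{\sigma_0^2/D}\leq I(\m{X}_0;\whatmn{X}_1)\leq I(\m{X}_1;\m{Y}_2)\leq h(\m{Y}_2)-h(\m{W}).
\]
Two computations close this. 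A max-entropy bound gives $h(\m{Y}_2)\leq\frac m2\lon{2\pi e\bigl((\sigma_0+\sqrt{P})^2+d_m(L)\sigma_G^2\bigr)}$, where the conditional per-dimension noise power $\sigma_G^2 d_m(L)$ comes from the chi-square dimension-shift identity $\expect{\|\m{Z}\|^2\indi{\|\m{Z}\|^2\leq mL^2}}=m\Pr(\|\mk{Z}{m+2}\|^2\leq mL^2)$; and an exact evaluation of the truncated-Gaussian entropy gives entropy power $\sigma_G^2 c_m^{-2/m}(L)e^{d_m(L)-1}$. Substituting both produces precisely $D\geq\kappa_2$.

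The main obstacle is this last step: because the distortion $\|\m{X}_0-\whatmn{X}_1\|^2$ is \emph{unbounded} (unlike the bounded-distortion setting of~\cite{waterslide}), I cannot simply freeze the reconstruction on a typical set and discard the complement; I must control the information quantities under a conditional law whose noise is a truncated, non-Gaussian distribution. The delicate points are justifying the data-processing inequality for the conditioned chain and evaluating $h(\m{W})$ and the conditional noise power exactly, since it is exactly the gap between the conditional variance $\sigma_G^2 d_m(L)$ and the conditional entropy power $\sigma_G^2 c_m^{-2/m}(L)e^{d_m(L)-1}$ that encodes the finite-$m$ improvement over Theorem~\ref{thm:oldbound}. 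Finally, the total-cost bound follows by adding the first-stage cost $k^2P$ and taking $\inf_{P\geq 0}$; and the claim that the new bound dominates Theorem~\ref{thm:oldbound} follows by taking $\sigma_G^2=1$ and letting $L\to\infty$, under which $c_m(L),d_m(L)\to1$ and $\kappa_2\to\kappa$ pointwise in $P$, so $k^2P+\eta$ recovers $k^2P+((\sqrt{\kappa}-\sqrt{P})^+)^2$ in the limit and the optimized new bound can only be larger.
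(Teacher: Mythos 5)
Your proposal is correct and follows essentially the same route as the paper's own proof: the change of measure from $f_Z$ to $f_G$ restricted to the ball $\{\|\m{z}\|^2\leq mL^2\sigma_G^2\}$ with the boundary likelihood-ratio bound, extraction of the prefactor $\frac{\sigma_G^m}{c_m(L)}e^{-mL^2(\sigma_G^2-1)/2}$, and then a data-processing/rate-distortion argument along $\m{X}_0\to\m{X}_1\to\m{Y}\to\whatmn{X}_1$ under the truncated-Gaussian noise law, using the chi-square dimension-shift identity for the conditional noise power $d_m(L)\sigma_G^2$ and the exact truncated-Gaussian entropy (entropy power $\sigma_G^2c_m^{-2/m}(L)e^{d_m(L)-1}$) to arrive at $D\geq\kappa_2$, exactly as in the paper's Lemmas~\ref{lem:epg} and~\ref{lem:capacity}. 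The points you flag as delicate (validity of the Markov chain under the conditioned law, and keeping the power term at $\sqrt{P}$ because $\m{U}_1$ is independent of the conditioning event) are handled in the paper by precisely the justifications you sketch, so nothing essential is missing.
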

\begin{proof}
From Theorem~\ref{thm:oldbound}, for a given $P$, a lower bound on the average second stage cost is $\left(\left( \sqrt{\kappa}-\sqrt{P}   \right)^+\right)^2$. We derive
 another lower bound that is equal to the 
 expression for $\eta(P,\sigma_0^2,\sigma_G^2,L)$. The high-level intuition behind this lower
 bound is presented in Fig.~\ref{fig:rainbow2}.

 Define $\mathcal{S}_L^G:=\{\m{z}:\|\m{z}\|^2\leq mL^2\sigma_G^2\}$
 and use subscripts to denote which probability model is being used for the second stage observation noise. $Z$ denotes white Gaussian
 of variance $1$ while $G$ denotes white Gaussian of variance
 $\sigma_G^2\geq 1$. 
\begin{eqnarray}
\nonumber\expectp{\m{X}_0,\m{Z}}{J_2^{(\gamma)}(\m{X}_0,\m{Z})}&= & \int_{\m{z}}\int_{\m{x}_0}J_2^{(\gamma)}(\m{x}_0,\m{z}) f_0(\m{x}_0) f_Z(\m{z}) d\m{x}_0 d\m{z}\\
\nonumber &\geq & \int_{\m{z}\in\mathcal{S}_L^G}\left(\int_{\m{x}_0}J_2^{(\gamma)}(\m{x}_0,\m{z}) f_0(\m{x}_0) d\m{x}_0\right) f_Z(\m{z}) d\m{z}\\
&= &\int_{\m{z}\in\mathcal{S}_L^G}\left(\int_{\m{x}_0}J_2^{(\gamma)}(\m{x}_0,\m{z}) f_0(\m{x}_0) d\m{x}_0\right)\frac{f_Z(\m{z})}{f_G(\m{z})}f_G(\m{z}) d\m{z}.
\label{eq:beforeratio}
\end{eqnarray}
%Here $J_2^{(0)}(\m{z})$ denotes the term in parentheses --- the
%second stage cost averaged over realizations of $\m{x}_0$ for a fixed
%$\m{z}$. 
The ratio of the two probability density functions is given by
\begin{eqnarray*}
\frac{f_Z(\m{z})}{f_G(\m{z})}=\frac{e^{-\frac{\|\m{z}\|^2}{2}}}{\left(\sqrt{2\pi}\right)^m}\frac{\left(\sqrt{2\pi\sigma_G^2}\right)^m}{e^{-\frac{\|\m{z}\|^2}{2\sigma_G^2}}}=\sigma_G^m e^{-\frac{\|\m{z}\|^2}{2}\left(1-\frac{1}{\sigma_G^2}\right)}.
\end{eqnarray*}
Observe that $\m{z}\in\mathcal{S}_L^G$, $\|\m{z}\|^2\leq mL^2\sigma_G^2$. Using $\sigma_G^2\geq 1$, we obtain
\begin{equation}
\frac{f_Z(\m{z})}{f_G(\m{z})}
\geq \sigma_G^m 
e^{-\frac{m L^2 \sigma_G^2}{2}\left(1-\frac{1}{\sigma_G^2}\right)}
= \sigma_G^m e^{-\frac{mL^2(\sigma_G^2-1)}{2}}.
\label{eq:afterratio}
\end{equation}
Using~\eqref{eq:beforeratio} and~\eqref{eq:afterratio},
\begin{eqnarray}
\nonumber\expectp{\m{X}_0,\m{Z}}{J_2^{(\gamma)}(\m{X}_0,\m{Z})} &\geq &\sigma_G^m e^{-\frac{mL^2(\sigma_G^2-1)}{2}} \int_{\m{z}\in\mathcal{S}_L^G}\left(\int_{\m{x}_0}J_2^{(\gamma)}(\m{x}_0,\m{z}) f_0(\m{x}_0) d\m{x}_0\right)  f_G(\m{z}) d\m{z}\\
\nonumber&=&\sigma_G^m e^{-\frac{mL^2(\sigma_G^2-1)}{2}}\expectp{\m{X}_0,\m{Z}_G}{J_2^{(\gamma)}(\m{X}_0,\m{Z}_G)\indi{\m{Z}_G\in\mathcal{S}_L^G}}\\
&=&\sigma_G^m
e^{-\frac{mL^2(\sigma_G^2-1)}{2}}\expectp{\m{X}_0,\m{Z}_G}{J_2^{(\gamma)}(\m{X}_0,\m{Z}_G)|\m{Z}_G\in\mathcal{S}_L^G}\Pr(\m{Z}_G\in\mathcal{S}_L^G).
\label{eq:explb}
\end{eqnarray}
Analyzing the probability term in~\eqref{eq:explb},
\begin{eqnarray}
\nonumber\Pr(\m{Z}_G\in\mathcal{S}_L^G)&=& \Pr\left(\|\m{Z}_G\|^2\leq mL^2\sigma_G^2\right)= \Pr\left(\left(\frac{\|\m{Z}_G\|}{\sigma_G}\right)^2\leq mL^2\right)\\
&=& 1-\Pr\left(\left(\frac{\|\m{Z}_G\|}{\sigma_G}\right)^2> mL^2\right)
= 1-\psi(m,L\sqrt{m}) = \frac{1}{c_m(L)},
\label{eq:sphereprob}
\end{eqnarray}
because $\frac{\m{Z}_G}{\sigma_G}\sim\mathcal{N}(0,\mathbb{I}_m)$. From~\eqref{eq:explb} and~\eqref{eq:sphereprob},
\begin{eqnarray}
\nonumber \expectp{\m{X}_0,\m{Z}}{J_2^{(\gamma)}(\m{X}_0,\m{Z})}&\geq & \sigma_G^m
e^{-\frac{mL^2(\sigma_G^2-1)}{2}}\expectp{\m{X}_0,\m{Z}_G}{J_2^{(\gamma)}(\m{X}_0,\m{Z}_G)|\m{Z}_G\in\mathcal{S}_L^G}(1-\psi(m,L\sqrt{m}))\\
& = & \frac{\sigma_G^m e^{-\frac{mL^2(\sigma_G^2-1)}{2}}}{c_m(L)}  \expectp{\m{X}_0,\m{Z}_G}{J_2^{(\gamma)}(\m{X}_0,\m{Z}_G)|\m{Z}_G\in\mathcal{S}_L^G}.
\label{eq:ep0z}
\end{eqnarray}
We now need the following lemma, which connects the new finite-length lower bound to the infinite-length lower bound
of~\cite{WitsenhausenJournal}.
\begin{lemma}
\label{lem:epg}
\begin{eqnarray*}
\expectp{\m{X}_0,\m{Z}_G}{J_2^{(\gamma)}(\m{X}_0,\m{Z}_G)|\m{Z}_G\in \mathcal{S}_L^G}
\geq \left(\left(   \sqrt{\kappa_2 (P,\sigma_0^2,\sigma_G^2,L)} -\sqrt{P}           \right)^+\right)^2,
\end{eqnarray*}
for any $L>0$.
\end{lemma}

\begin{proof}% OF LEMMA
See Appendix~\ref{app:ep0g}.
\end{proof} % OF LEMMA
The lower bound on the total average cost now follows from~\eqref{eq:ep0z} and Lemma~\ref{lem:epg}. 

We now verify that $d_m(L)\in(0,1)$. That $d_m(L)>0$ is clear from definition. $d_m(L)<1$ because $\{\mk{z}{m+2}:\| \mk{z}{m+2}\|^2\leq mL^2\sigma_G^2\} \subset  \{\mk{z}{m+2}:  \|\mk{z}{m}\|^2\leq mL^2\sigma_G^2\}$, \emph{i.e.}, a sphere sits inside a cylinder.

Finally we verify that this new lower bound is at least as tight as the one in Theorem~\ref{thm:oldbound}. Choosing $\sigma_G^2=1$ in the expression for $\eta(P,\sigma_0^2,\sigma_G^2,L)$,
\begin{eqnarray*}
\eta(P,\sigma_0^2,\sigma_G^2,L)\geq \sup_{L>0}\frac{1}{c_m(L)}\left(\left(   \sqrt{\kappa_2 (P,\sigma_0^2,1,L)} -\sqrt{P}           \right)^+\right)^2.
\end{eqnarray*}
Now notice that $c_m(L)$ and $d_m(L)$ converge to $1$ as $L\rightarrow\infty$. Thus $\kappa_2(P,\sigma_0^2,1,L)\overset{L\rightarrow\infty}{\longrightarrow} \kappa(P,\sigma_0^2)$ and therefore, $\eta(P,\sigma_0^2,\sigma_G^2,L)$ is lower bounded by $\left(\left( \sqrt{\kappa}-\sqrt{P}   \right)^+\right)^2$, the lower bound in Theorem~\ref{thm:oldbound}.

\end{proof} % OF THEOREM 1

\section{Combination of linear and lattice-based strategies attain within a constant factor of the optimal cost}
\label{sec:ratio}
%It has been observed that for the infinite-length vector case~\cite{WitsenhausenJournal} and for the scalar case~\cite{baglietto} that in some regimes,
%linear strategies perform better than quantization based
%strategies. Thus for an upper bound on the optimal cost, we consider
%the minimum of that achieved by lattice-based strategies and the
%optimal linear strategy. In this section we show that the cost
%attained by this upper bound is within a constant factor of the lower
%bound of Section~\ref{sec:lowerbound} uniformly over all
%$m,k^2,\sigma_0^2$.
%%%%%%%%%%

\begin{theorem}[Constant-factor optimality]
The costs for $W(m,k^2,\sigma_0^2)$ are bounded as follows
\begin{eqnarray*}
\inf_{P\geq 0} \sup_{\sigma_G^2\geq 1,L>0} k^2P+\eta(P,\sigma_0^2,\sigma_G^2,L) \leq 
\bar{J}_{min}(m,k^2,\sigma_0^2)
 \leq 
\mu \left(\inf_{P\geq 0}  \sup_{\sigma_G^2\geq 1,L>0} k^2P+\eta(P,\sigma_0^2,\sigma_G^2,L)\right),
\end{eqnarray*}
where $\mu=100\xi^2$, $\xi$ is the packing-covering ratio of any lattice in $\mathbb{R}^m$, and $\eta(\cdot)$ is as defined in Theorem~\ref{thm:newbound}. For any $m$, $\mu<1600$. Further, depending on the $(m,k^2,\sigma_0^2)$
values, the  upper bound can be attained by lattice-based quantization
strategies or linear strategies. For $m=1$, a numerical calculation (MATLAB code available at~\cite{finiteWitsenahusenMatlabCode}) shows that $\mu<8$ (see Fig.~\ref{fig:scalar2}). 
\end{theorem}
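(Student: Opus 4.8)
The plan is to prove the two inequalities separately, with essentially all the work in the right-hand (achievability) inequality. The left inequality is immediate from Theorem~\ref{thm:newbound}: any strategy $\gamma$ with first-stage power $P_\gamma$ satisfies $\bar{J}^{(\gamma)}=k^2P_\gamma+\bar{J}_2^{(\gamma)}\geq k^2P_\gamma+\eta(P_\gamma,\sigma_0^2,\sigma_G^2,L)$ for every $\sigma_G^2\geq1$ and $L>0$, so taking the supremum over $(\sigma_G^2,L)$ and then the infimum over strategies (equivalently over $P\geq0$, since every power is achievable) gives $\bar{J}_{\min}(m,k^2,\sigma_0^2)\geq \inf_{P\geq0}\sup_{\sigma_G^2\geq1,L>0}(k^2P+\eta)=:\mathrm{LB}$. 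It therefore remains to produce, for each triple $(m,k^2,\sigma_0^2)$, an admissible strategy whose cost is at most $100\xi^2\,\mathrm{LB}$.

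For achievability I would use exactly two families: the optimal linear scheme and the lattice scheme of Theorem~\ref{thm:upperbound}, selecting between them by parameter regime. Two crude but convenient estimates drive the argument. First, the linear gains $\lambda=-1$ (forcing $\m{X}_1=\mathbf{0}$) and $\lambda=0$ (doing nothing in the first stage) show that the best linear cost is at most $\min\!\big(k^2\sigma_0^2,\tfrac{\sigma_0^2}{\sigma_0^2+1}\big)$. Second, the looser exponential bound of Theorem~\ref{thm:upperbound} shows that, for $P>\xi^2$, the lattice cost is at most $k^2P$ plus a term decaying like $e^{-mP/(2\xi^2)}$ times a polynomial factor; choosing $P$ a fixed multiple of $\xi^2$ makes the second-stage term a controlled fraction of $k^2P$, so the lattice cost is $O(k^2\xi^2)$. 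The heuristic is that linear is near-optimal exactly when implicit communication cannot help (large $k$, or small $\sigma_0$), while lattice quantization is near-optimal when it is essential (small $k$, large $\sigma_0$).

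The core of the proof is then a regime split together with a matching choice of the free parameters $(\sigma_G^2,L,P)$ inside $\eta$ used to \emph{lower} bound $\mathrm{LB}$. In the ``linear'' regime I would lower bound $\mathrm{LB}$ by evaluating $\eta$ at $\sigma_G^2=1$ and $L\to\infty$, where $c_m(L),d_m(L)\to1$ and $\kappa_2\to\kappa$, recovering a Witsenhausen-type bound comparable to $\tfrac{\sigma_0^2}{\sigma_0^2+1}$. In the ``lattice'' regime I would exploit the crucial new feature that, as $\sigma_0\to\infty$, $\kappa_2(P,\sigma_0^2,\sigma_G^2,L)$ \emph{saturates} to a strictly positive constant (unlike the old $\kappa$, which vanishes), so that a nonvanishing large-deviation factor $\tfrac{\sigma_G^m}{c_m(L)}e^{-mL^2(\sigma_G^2-1)/2}$ survives and produces a lower bound of order $k^2\xi^2$ matching the lattice cost. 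Each regime then reduces to an explicit one-variable inequality whose worst case over $(m,k^2,\sigma_0^2)$ I would bound by the universal constant $100\xi^2$.

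The main obstacle is precisely this uniformity in $m$: one must choose $(\sigma_G^2,L)$ so that the finite-length correction factors $e^{-mL^2(\sigma_G^2-1)/2}$, $c_m(L)$ and $d_m(L)$ do not erode the constant, while \emph{simultaneously} the lattice upper bound's tail $\psi(m,r_p)$ is controlled; matching the exponential rate of the achievability tail to that of the lower-bound large-deviation factor is the crux, and is what forces the somewhat loose factor $100$. Finally, $\mu<1600$ follows because every dimension admits a lattice with $\xi<4$; for $m=1$ one takes $\Lambda=\mathbb{Z}$ (so $\xi=1$), and the remaining scalar optimization over $(k^2,\sigma_0^2,P,\sigma_G^2,L)$ is carried out numerically to certify $\mu<8$.
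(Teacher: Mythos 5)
Your high-level architecture does match the paper's: the left inequality is read off from Theorem~\ref{thm:newbound}, and the right one comes from a case analysis pitting two linear schemes (zero-forcing, cost $k^2\sigma_0^2$; zero-input, cost $\frac{\sigma_0^2}{\sigma_0^2+1}$) against the lattice scheme of Theorem~\ref{thm:upperbound}, with $\xi\le 4$ in every dimension giving the uniform constant. You even name the true crux --- matching the exponential rate of the lattice tail to the rate of the large-deviation factor in $\eta$. The problem is that your concrete plan contradicts that crux and would fail in exactly the regime that matters.

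Specifically, the claim that ``choosing $P$ a fixed multiple of $\xi^2$ makes the second-stage term a controlled fraction of $k^2P$, so the lattice cost is $O(k^2\xi^2)$'' is false for small $k$: with $P=C\xi^2$ fixed, the lattice second-stage term $(1+\sqrt{C})^2e^{-mC/2+\frac{m+2}{2}(1+\ln C)}$ is a constant independent of $k$, while $k^2P\to 0$, so the total is not $O(k^2\xi^2)$. In the critical regime (small $k$, large $\sigma_0$) the optimal $P$ must grow like $\frac{\xi^2}{m}\ln\frac{1}{k^2}$, and both sides of the theorem are of order $k^2\log(1/k)/m$; likewise your claim that the lower bound in this regime is ``of order $k^2\xi^2$'' cannot be right, since a lower bound on $\bar{J}_{\min}$ cannot depend on the $\xi$ of a lattice used by the upper bound. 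The paper closes this case by \emph{coupling} the two bounds through the lower bound's power: parametrize the case analysis by the $P^*$ achieving the lower bound's infimum; when $\frac{1}{2}<P^*\le \frac{\sigma_0^2}{100}$ and $\sigma_0^2>16$, choose $\sigma_G^2=6P^*$ and $L=2$ inside $\eta$ (Markov's inequality gives $c_m(2)\le \frac{4}{3}$ and $d_m(2)\ge\frac{1}{4}$, whence $\kappa_2\ge 1.255\,P^*$ and the lower bound $k^2P^*+\frac{1}{9}e^{-12mP^*}$), and evaluate the lattice bound at $P=100\xi^2P^*$, so that its exponent $-0.12mP/\xi^2$ becomes exactly $-12mP^*$ and the ratio collapses to $100\xi^2$. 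Relatedly, your stated reason the new bound helps is wrong: the old $\kappa$ does not vanish as $\sigma_0\to\infty$ (it tends to $1$ for fixed $P$). The real point is that $\kappa\le 1$ always, so the old second-stage bound $\left(\left(\sqrt{\kappa}-\sqrt{P}\right)^+\right)^2$ is identically zero once $P\ge 1$, whereas $\kappa_2$ carries $\sigma_G^2$ in its numerator and can be made to grow linearly in $P$ by taking $\sigma_G^2\propto P$, so $\eta$ stays strictly positive for every $P$, decaying at an exponential rate that can be tuned to match the achievability tail. Without this coupling of $P$ to $P^*$ (or an equivalent device), the ``one-variable inequality'' you hope to reduce to in the lattice regime never closes.
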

%%%%%%%%%%%%%%
\begin{proof}
\begin{figure}
\begin{center}
\includegraphics[width=9cm]{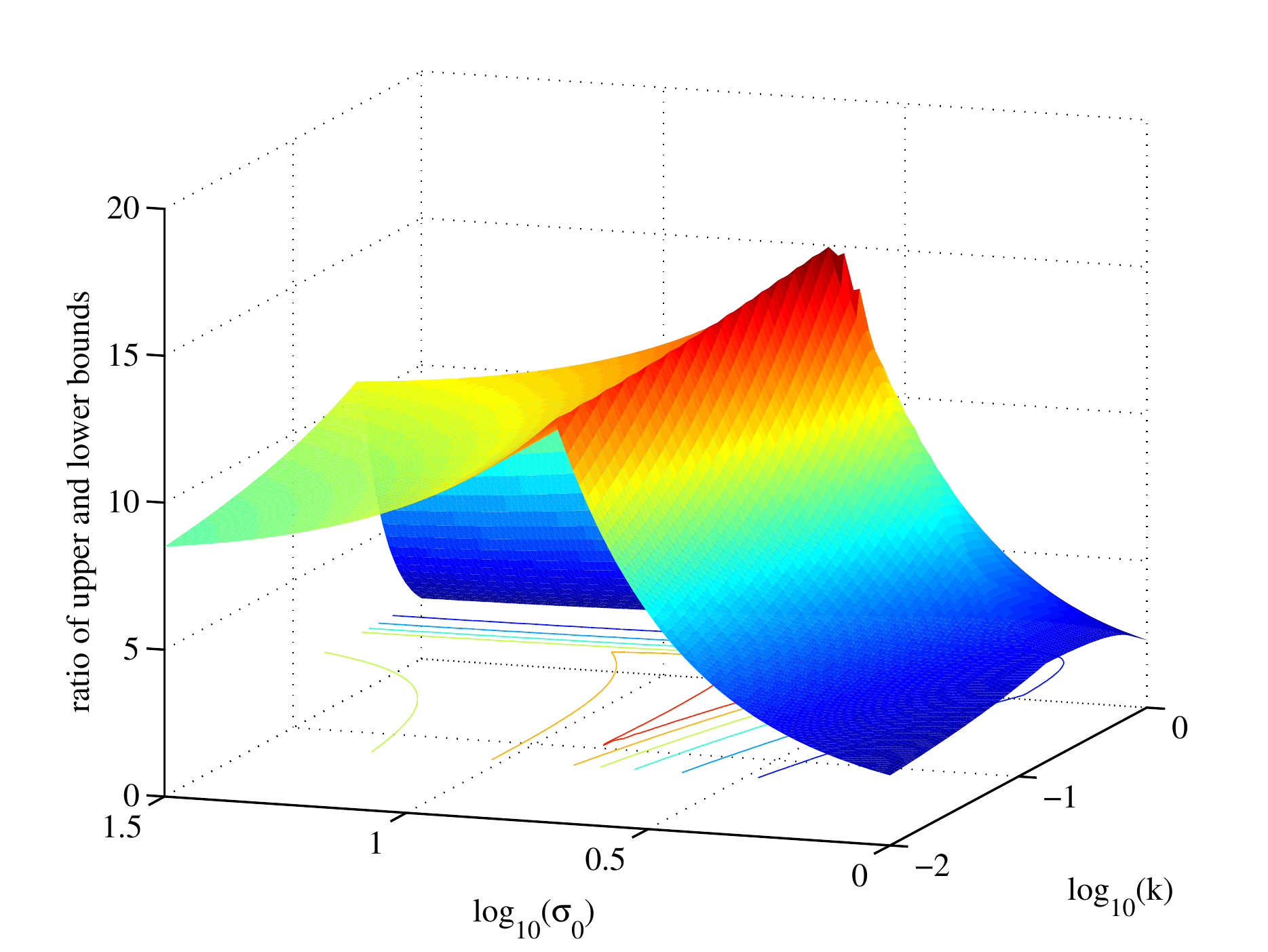}
\includegraphics[width=9cm]{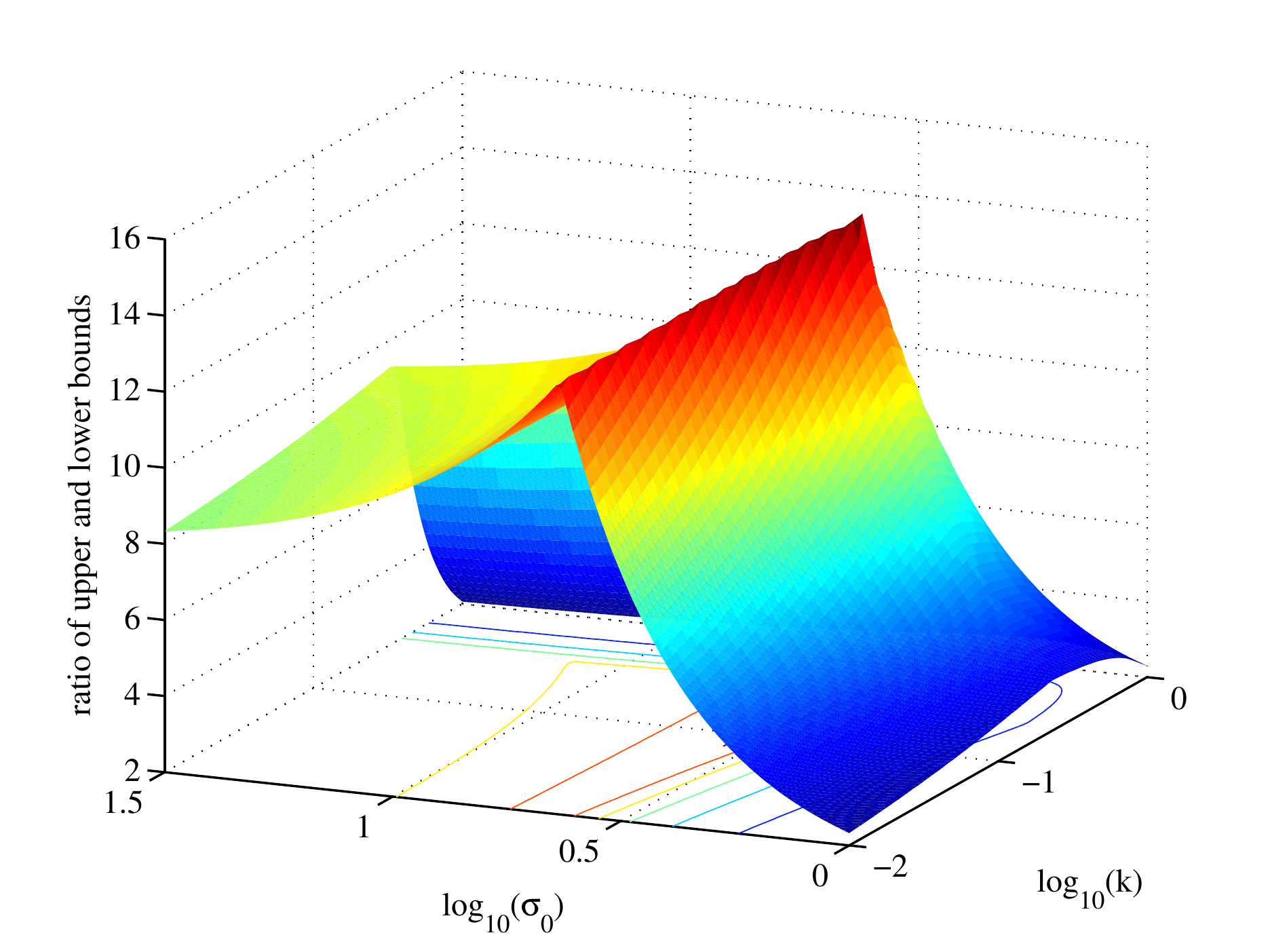}
\end{center}
\caption{The ratio of the upper and the lower bounds for the scalar
  Witsenhausen problem (top), and the 2-D Witsenhausen problem
  (bottom, using hexagonal lattice of $\xi=\frac{2}{\sqrt{3}}$) for a
  range of values of $k$ and $\sigma_0$. The ratio is bounded above by
  $17$ for the scalar problem, and by $14.75$ for the 2-D problem.} 
\label{fig:scalar}
\end{figure}
\begin{figure}
\begin{center}
\includegraphics[width=9cm]{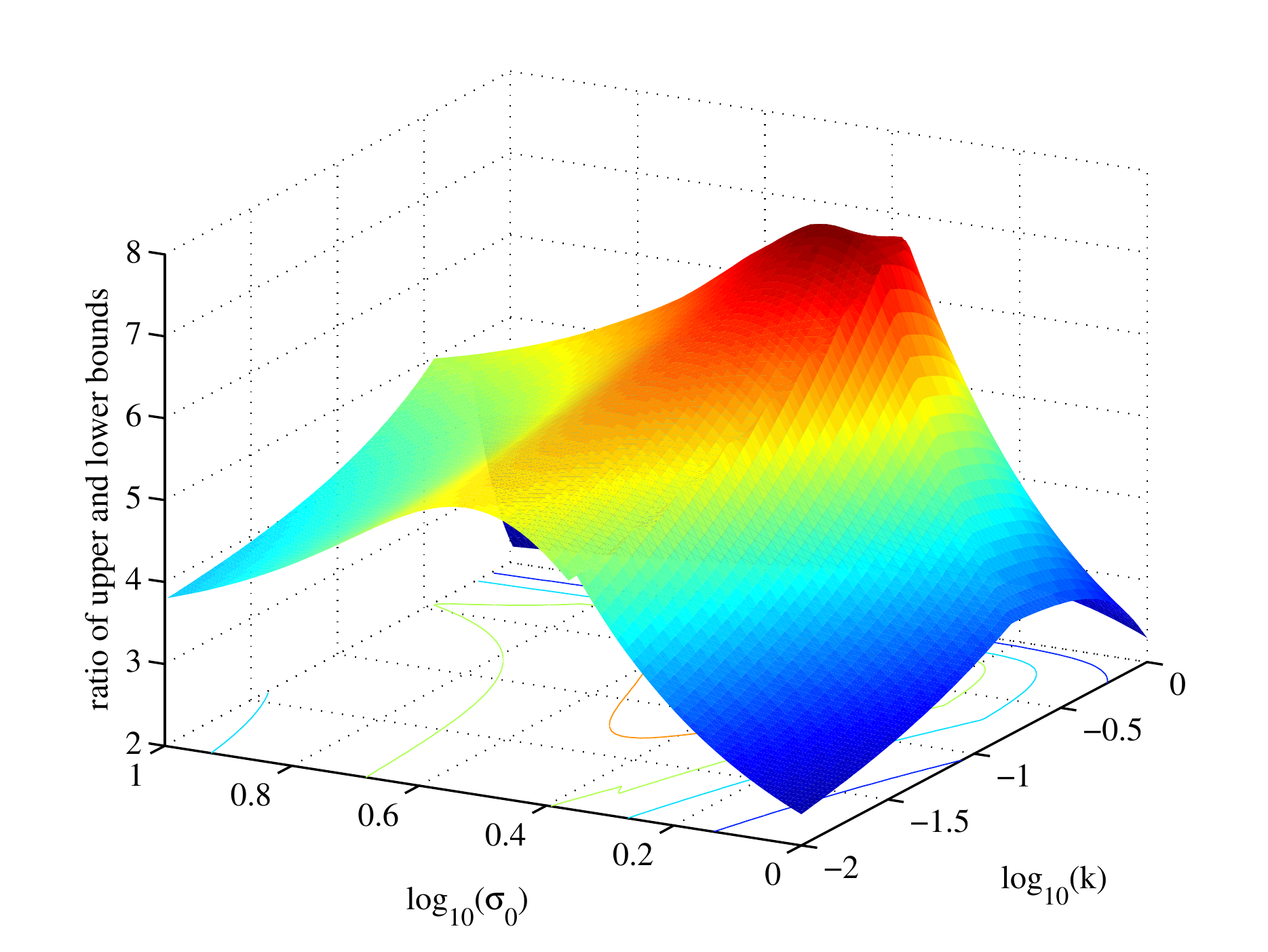}
\end{center}
\caption{An exact calculation of the first and second stage costs yields an improved maximum ratio smaller than $8$ for the scalar Witsenhausen problem.}
\label{fig:scalar2}
\end{figure}
Let $P^*$ denote the power $P$ in the lower bound in Theorem~\ref{thm:newbound}. We show  here that for any choice of $P^*$, the ratio of the upper and the lower bound is bounded. 

Consider the two simple linear strategies
of zero-forcing ($\m{u}_1=-\m{x}_0$) and zero-input ($\m{u}_1=0$)
followed by LLSE estimation at \co{2}. It is easy to
see~\cite{WitsenhausenJournal} that the average cost attained using these two
strategies is $k^2\sigma_0^2$ and $\frac{\sigma_0^2}{\sigma_0^2+1}<1$
respectively. An upper bound is obtained using
the best amongst the two linear strategies and the lattice-based
quantization strategy. 

%%%%%
\textit{Case 1}: $P^*\geq\frac{\sigma_0^2}{100}$. \\
The first stage cost is larger than $k^2\frac{\sigma_0^2}{100}$. Consider the upper bound of $k^2\sigma_0^2$ obtained by zero-forcing. The ratio of the upper bound and the lower bound is no larger than $100$. 

%For $\sigma_0^2<1$, results in~\cite{WitsenhausenJournal} show that linear schemes of zero-forcing and zero-input attain within a factor of $5$ of the optimal. 

%%%%%
\textit{Case 2}: $P^*<\frac{\sigma_0^2}{100}$ and  $\sigma_0^2<16$.  \\
Using the bound from Theorem~\ref{thm:oldbound} (which is a special case of the bound in Theorem~\ref{thm:newbound}),
\begin{eqnarray*}
\kappa &=& \frac{\sigma_0^2}{(\sigma_0+\sqrt{P^*})^2+1}
\overset{\left(P^*<\frac{\sigma_0^2}{100}\right)}{\geq}  \frac{\sigma_0^2}{\sigma_0^2\left(1+\frac{1}{\sqrt{100}}\right)^2+1}\\
&\overset{(\sigma_0^2< 16)}{\geq} & \frac{\sigma_0^2}{16\left(1+\frac{1}{\sqrt{100}}\right)^2+1}=\frac{\sigma_0^2}{20.36 }\geq \frac{\sigma_0^2}{21}.
\end{eqnarray*}
Thus, for $\sigma_0^2<16$ and $P^*\leq \frac{\sigma_0^2}{100}$,
\begin{eqnarray*}
\bar{J}_{min}&\geq& \left((\sqrt{\kappa}-\sqrt{P^*})^+\right)^2\geq \sigma_0^2\left(\frac{1}{\sqrt{21}}-\frac{1}{\sqrt{100}}\right)^2\approx  0.014\sigma_0^2 \geq \frac{\sigma_0^2}{72}.
\end{eqnarray*}
Using the zero-input upper bound of $\frac{\sigma_0^2}{\sigma_0^2+1}$, the ratio of the upper and lower bounds is at most $\frac{72}{\sigma_0^2+1}\leq 72$.

%%%%%
\textit{Case 3}: $P^*\leq\frac{\sigma_0^2}{100}, \sigma_0^2\geq 16, P^*\leq \frac{1}{2}$.\\
In this case, 
\begin{eqnarray*}
\kappa &=&\frac{\sigma_0^2}{(\sigma_0+\sqrt{P^*})^2+1}\overset{(P^*\leq \frac{1}{2})}{\geq} \frac{\sigma_0^2}{(\sigma_0+\sqrt{0.5})^2+1}\\
&\overset{(a)}{\geq}& \frac{16}{(\sqrt{16}+\sqrt{0.5})^2+1}\approx 0.6909 \geq 0.69,
\end{eqnarray*}
where $(a)$ uses $\sigma_0^2\geq 16$ and the observation that $\frac{x^2}{(x+b)^2+1}=\frac{1}{\left(1+\frac{b}{x}\right)^2+\frac{1}{x^2}}$ is an increasing function of $x$ for $x,b>0$. Thus, 
\begin{eqnarray*}
\left((\sqrt{\kappa}-\sqrt{P})^+\right)^2\geq ((\sqrt{0.69}-\sqrt{0.5})^+)^2\approx 0.0153 \geq 0.015.
\end{eqnarray*}
Using the upper bound of $\frac{\sigma_0^2}{\sigma_0^2+1}<1$, the ratio of the upper and the lower bounds is smaller than $\frac{1}{0.015}<67$.

\textit{Case 4}: $\sigma_0^2>16$, $\frac{1}{2}<P^*\leq\frac{\sigma_0^2}{100}$

Using $L=2$ in the lower bound, 
\begin{eqnarray*}
c_m(L)&=&\frac{1}{\Pr(\|\m{Z}\|^2\leq mL^2)}=\frac{1}{1-\Pr(\|\m{Z}\|^2 > mL^2)}\\
& \overset{\text{(Markov's ineq.)}}{\leq} & \frac{1}{1-\frac{m}{mL^2}} \overset{(L=2)}{=}  \frac{4}{3},
\end{eqnarray*}
Similarly, 
\begin{eqnarray*}
d_m(2)& =& \frac{\Pr(\|\mk{Z}{m+2}\|^2\leq mL^2)}{\Pr(\|\m{Z}\|^2\leq mL^2)}\\
&\geq & \Pr(\|\mk{Z}{m+2}\|^2\leq mL^2) =  1-\Pr(\|\mk{Z}{m+2}\|^2> mL^2)\\
& \overset{\text{(Markov's ineq.)}}{\geq} & 1 - \frac{m+2}{mL^2} =  1 - \frac{1+\frac{2}{m}}{4}\overset{(m\geq 1)}{\geq} 1-\frac{3}{4}=\frac{1}{4}.
\end{eqnarray*}
In the bound, we are free to use any $\sigma_G^2\geq 1$. Using
$\sigma_G^2=6P^*>1$,
\begin{eqnarray*}
\kappa_2 &=&\frac{\sigma_G^2\sigma_0^2}{\left((\sigma_0+\sqrt{P^*})^2+d_m(2)\sigma_G^2\right)c_m^{\frac{2}{m}}(2)  e^{1-d_m(2)}     }\\
&\overset{(a)}{\geq}&\frac{6P^*\sigma_0^2}{\left((\sigma_0+\frac{\sigma_0}{10})^2+\frac{6\sigma_0^2}{100}\right)  \left(\frac{4}{3}\right)^{\frac{2}{m}}  e^{\frac{3}{4}}     }\overset{(m\geq 1)}{\geq}  1.255 P^*.
%&=&\frac{4c_m^{\frac{2}{m}}(2)P^*\sigma_0^2}
%{\left((\sigma_0+\sqrt{P^*})^2+4c_m^{1+\frac{2}{m}}(2)P^* \right) 
%c_m^{\frac{2}{m}}(2)e^{1-d_m}} \\
%&\overset{\left(P^*<\frac{\sigma_0^2}{300}\right)}\geq& 
%\frac{4P^*}
%{\left(\left(1+\frac{1}{\sqrt{300}}\right)^2 
%+4\left(\frac{4}{3}\right)^{1+\frac{2}{m}}\frac{1}{300}\right)e}\\
%&\overset{(m \geq 1)}\geq& 1.294 P^*.
\end{eqnarray*}
where $(a)$ uses $\sigma_G^2=6P^*, P^*<\frac{\sigma_0^2}{100}, c_m(2)\leq \frac{4}{3}$ and $1>d_m(2)\geq \frac{1}{4}$. Thus, 
\begin{equation}
\left((\sqrt{\kappa_2}-\sqrt{P^*})^+\right)^2 \geq
P^*(\sqrt{1.255}-1)^2 \geq \frac{P^*}{70}.
\end{equation}
Now, using the lower bound on the total cost from Theorem~\ref{thm:newbound}, and substituting $L=2$,
\begin{eqnarray}
\nonumber \bar{J}_{min}(m,k^2,\sigma_0^2)  &\geq &
k^2P^* + 
\frac{\sigma_G^m}{c_m(2)}
\exp\left(-\frac{mL^2(\sigma_G^2-1)}{2}\right) \left(  \left(\sqrt{\kappa_2} - \sqrt{P^*}\right)^+ \right)^2\\
\nonumber &\overset{(\sigma_G^2=6P^*)}{\geq}& k^2P^* + \frac{(6P^*)^m}{c_m(2)} \exp\left( -\frac{4m(6P^*-1)}{2}   \right)\;\frac{P^*}{70} \\
\nonumber &\overset{(a)}{\geq}& k^2P^* + 
\frac{3^m}{\frac{4}{3}} e^{2m} e^{-12P^*m}\;\frac{1}{70\times 2}\\
\nonumber &\overset{(m\geq 1)}{\geq}& k^2P^* + \frac{3\times 3\times e^2}{4\times 70\times 2} e^{-12mP^*}\\
&> & k^2P^* + \frac{1}{9}e^{-12mP^*},
\label{eq:jminlower}
\end{eqnarray}
where $(a)$ uses $c_m(2) \leq \frac{4}{3}$ and $P^*\geq \frac{1}{2}$. We loosen the lattice-based upper bound from Theorem~\ref{thm:upperbound} and bring it into a form similar to~\eqref{eq:jminlower}. Here, $P$ is a part of the optimization:
\begin{eqnarray}
&&\bar{J}_{min}(m,k^2,\sigma_0^2)\nonumber\\
&\leq &\inf_{P>\xi^2}k^2P+\left(1+\sqrt{\frac{P}{\xi^2}}\right)^2e^{-\frac{mP}{2\xi^2}+\frac{m+2}{2}\left(1+\lon{\frac{P}{\xi^2}}  \right)}\nonumber\\
&\leq &\inf_{P>\xi^2}k^2P +\frac{1}{9}e^{-\frac{0.5mP}{\xi^2}+\frac{m+2}{2}\left(1+\lon{\frac{P}{\xi^2}}  \right) + 2\lon{   1+\sqrt{\frac{P}{\xi^2}   } }  +\lon{9}   }\nonumber\\
&\leq &\inf_{P>\xi^2}k^2P+\frac{1}{9}e^{-m\left(\frac{0.5P}{\xi^2}-\frac{m+2}{2m}\left(1+\lon{\frac{P}{\xi^2}}  \right) - \frac{2}{m}\lon{   1+\sqrt{\frac{P}{\xi^2}}    }   -\frac{\lon{9}}{m}   \right)}\nonumber\\
\nonumber&=&\inf_{P>\xi^2}k^2P+\frac{1}{9}e^{-\frac{0.12 mP}{\xi^2}}\times e^{-m\left(\frac{0.38P}{\xi^2}-\frac{1+\frac{2}{m}}{2}\left(1+\lon{\frac{P}{\xi^2}}\right)-\frac{2}{m}\lon{1+\sqrt{\frac{P}{\xi^2}}}  -\frac{\lon{9}}{m}   \right) } \nonumber\\
\nonumber&\overset{(m\geq 1)}{\leq}&\inf_{P>\xi^2}k^2P+\frac{1}{9}e^{-\frac{0.12 mP}{\xi^2}} e^{-m\left(\frac{0.38P}{\xi^2}-\frac{3}{2}\left(1+\lon{\frac{P}{\xi^2}}  \right)-2\lon{1+\sqrt{\frac{P}{\xi^2}}}  -\lon{9} \right)  } \nonumber \\
&\leq& \inf_{P\geq 34\xi^2} k^2P+\frac{1}{9}e^{-\frac{0.12 mP}{\xi^2}},
\label{eq:jminupper}
\end{eqnarray}
where the last inequality follows from the fact that
$\frac{0.38P}{\xi^2}>\frac{3}{2}\left(1+\lon{\frac{P}{\xi^2}} \right) +
2\lon{1+\sqrt{\frac{P}{\xi^2}}}+\lon{9}$ for $\frac{P}{\xi^2}>34$. This
can be checked easily by plotting it.\footnote{It can also be verified
symbolically by examining the expression $g(b) = 0.38b^2 -
\frac{3}{2}(1 + \ln b^2) - 2 \ln(1+b) - \lon{9}$, taking its derivative
$g'(b) = 0.76b - \frac{3}{b} - \frac{2}{1+b}$, and second
derivative $g''(b) = 0.76 + \frac{3}{b^2} + \frac{2}{(1+b)^2}
> 0$. Thus $g(\cdot{})$ is convex-$\cup$. Further, $g'(\sqrt{34})\approx 3.62>0$, and $g(\sqrt{34}) \approx 0.09$ and so $g(b) > 0$ whenever $b \geq \sqrt{34}$.}

Using $P=100\xi^2P^{*}\geq 50\xi^2>34\xi^2$ (since
$P^{*}\geq\frac{1}{2}$) in~\eqref{eq:jminupper},
\begin{eqnarray}
\nonumber
\bar{J}_{min}(m,k^2,\sigma_0^2)&\leq& k^2 100\xi^2P^{*}+\frac{1}{9}e^{-m\frac{0.12\times 100\xi^2P^{*}}{\xi^2}}\\
&=& k^2 100\xi^2P^{*}+\frac{1}{9}e^{-12mP^{*}}.
\label{eq:upper2}
\end{eqnarray} 
Using~\eqref{eq:jminlower} and~\eqref{eq:upper2}, the ratio of the
upper and the lower bounds is bounded for all $m$ since
\begin{equation}
\mu\leq \frac{ k^2 100\xi^2P^{*}+\frac{1}{9}e^{-12mP^{*}}}{
  k^2P^{*}+\frac{1}{9}e^{-12mP^{*}}}\leq \frac{ k^2 100\xi^2P^{*}}{
  k^2P^{*}}=100\xi^2. 
\end{equation}

For $m=1$, $\xi=1$, and thus in the proof the ratio $\mu\leq 100$. For $m$
large, $\xi\approx 2$~\cite{almosteverything}, and $\mu\lesssim
400$. For arbitrary $m$, using the recursive construction
in~\cite[Theorem 8.18]{Micciancio}, $\xi\leq 4$, and thus $\mu\leq
1600$ regardless of $m$. 
\end{proof}
Though the proof above succeeds in showing that the ratio is uniformly bounded by a constant, it is not very insightful and the constant is large. However, since the underlying vector bound  can be tightened (as shown in~\cite{ITW09Paper}), it is not worth improving the proof for increased elegance at this time. The important thing is that such a uniform constant exists. 

A numerical evaluation of the upper and lower bounds (of Theorem~\ref{thm:upperbound} and~\ref{thm:newbound} respectively) shows that the ratio is smaller than $17$ for $m=1$ (see Fig.~\ref{fig:scalar}). A precise calculation of the cost of the quantization strategy improves the upper bound to yield a maximum ratio smaller than $8$  (see Fig.~\ref{fig:scalar2}). 

A simple grid lattice has a packing-covering ratio $\xi=\sqrt{m}$. Therefore, while the grid lattice has the best possible packing-covering ratio of $1$ in the scalar case, it has a rather large packing covering ratio of $\sqrt{2}\;(\approx 1.41)$ for $m=2$. On the other hand, a hexagonal lattice (for $m=2$) has an improved packing-covering ratio of $\frac{2}{\sqrt{3}}\approx 1.15$. In contrast with $m=1$, where the ratio of upper and lower bounds of Theorem~\ref{thm:upperbound} and~\ref{thm:newbound} is approximately $17$, a hexagonal lattice yields a ratio smaller than $14.75$, despite having a larger packing-covering ratio. This is a consequence of the tightening of the sphere-packing lower bound (Theorem~\ref{thm:newbound}) as $m$ gets large\footnote{Indeed, in the limit $m\rightarrow\infty$, the ratio of the asymptotic average costs attained by a vector-quantization strategy and the vector lower bound of Theorem~\ref{thm:oldbound} is bounded by $4.45$~\cite{WitsenhausenJournal}.}.

%%%%%%%%%%%%%%
\section{Discussions of numerical explorations and Conclusions}
\label{sec:conclusions}
Though lattice-based quantization strategies allow us to get within a constant
factor of the optimal cost for the vector Witsenhausen problem, they are
not optimal. This is known for the scalar~\cite{LeeLauHo} and the
infinite-length case~\cite{WitsenhausenJournal}. It is shown
in~\cite{WitsenhausenJournal} that the ``slopey-quantization" strategy of Lee, Lau and Ho~\cite{LeeLauHo} that is believed to be very close to optimal in the scalar case can be viewed as an instance of a linear scaling followed by a dirty-paper coding (DPC) strategy. Such DPC-based strategies are also the best known strategies in the asymptotic infinite-dimensional
case, requiring optimal power $P$ to attain $0$ asymptotic mean-square error in the estimation of $\m{x}_1$, and attaining costs within a factor of $1.3$ of the optimal~\cite{ITW09Paper} for all $(k,\sigma_0^2)$. This leads us to conjecture that a DPC-like strategy might be optimal for finite-vector lengths as well. In the following, we numerically explore the performance of DPC-like strategies.

\begin{figure}
\begin{center}
\includegraphics[width = 4 in]{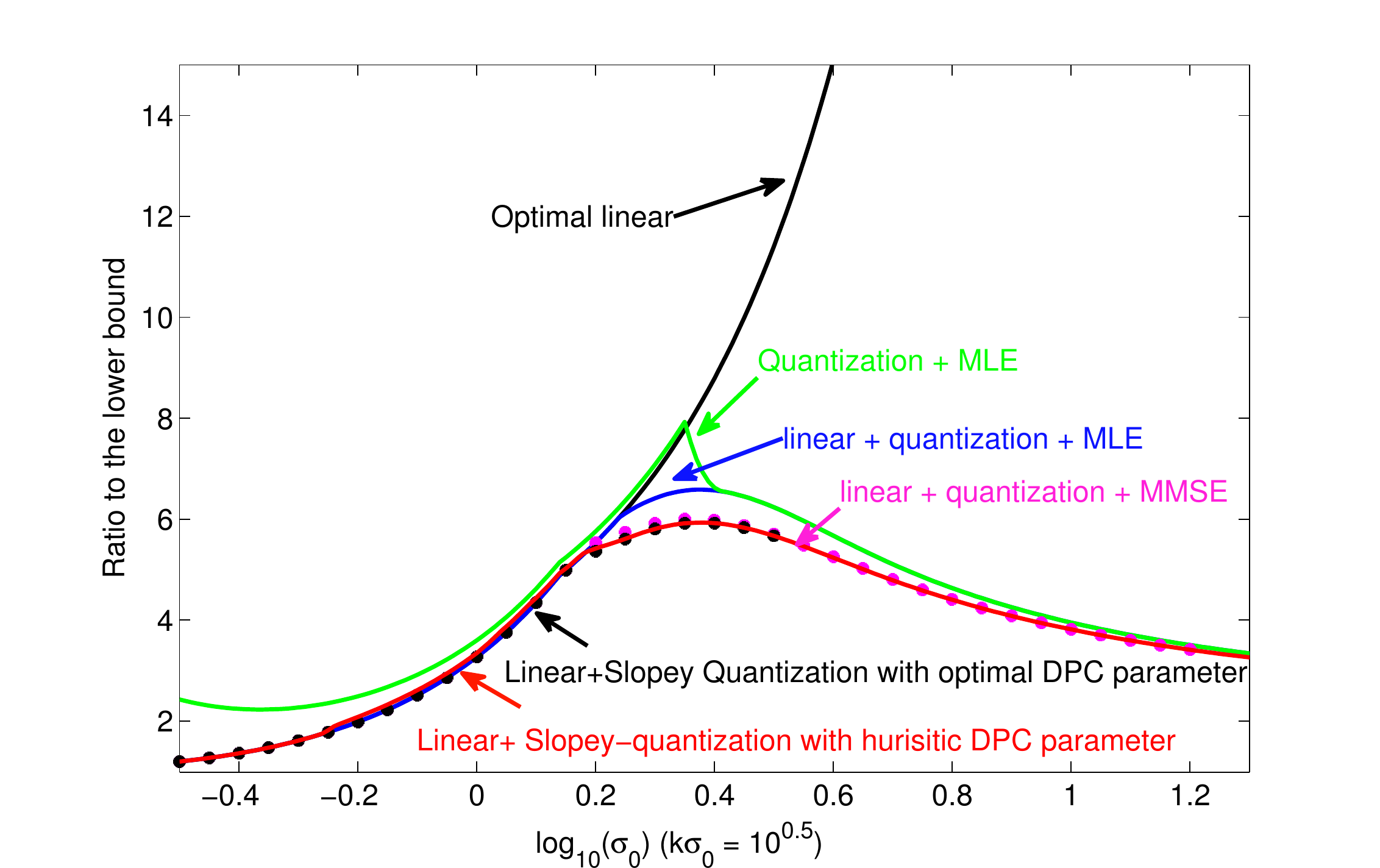}
\end{center}
\caption{Ratio of the achievable costs to the scalar lower bound along $k\sigma_0 =10^{-0.5}$ for various strategies. Quantization with MMSE-estimation at the second controller outperforms quantization with MLE, or even scaled MLE. For slopey-quantization with heuristic DPC-parameter, the parameter $\alpha$ in DPC-based scheme is borrowed from the infinite-length analysis. The figure suggests that along this path ($k\sigma_0=\sqrt{10}$), the difference between optimal-DPC and heuristic DPC is not substantial. However, Fig.~\ref{fig:abvsabc} (b) shows that this is not true in general. }
\label{fig:line}
\end{figure}

\begin{figure}
\begin{center}
\includegraphics[width = 6 in]{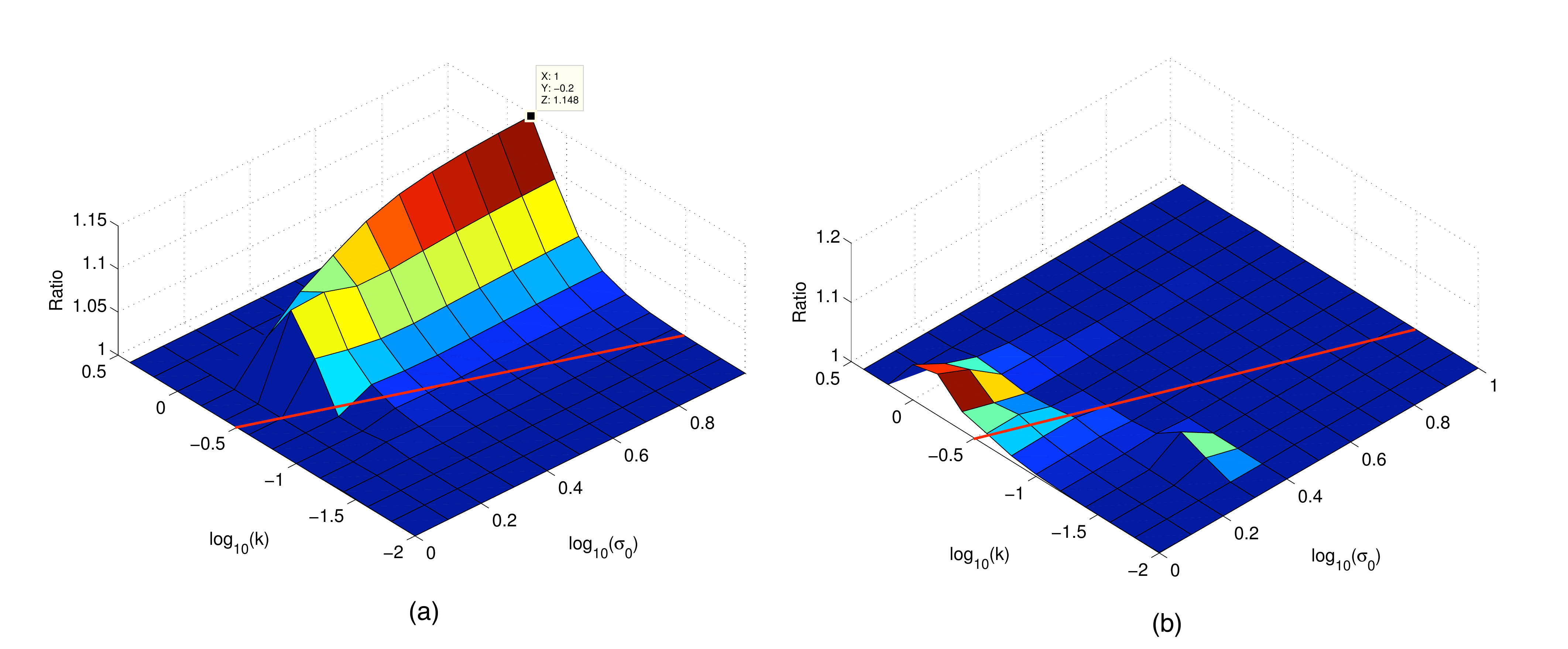}
\end{center}
\caption{(a) shows the ratio of cost attained by linear+quantization (with MMSE decoding) to DPC with parameter $\alpha$ obtained by brute-force optimization. DPC can do up to $15\%$ better than the optimal quantization strategy. Also the maximum is attained along $k\approx 0.6$ which is different from $k=0.2$ of the benchmark problem~\cite{LeeLauHo}. (b) shows the ratio of cost attained by linear+quantization to DPC with $\alpha$ borrowed from infinite-length optimization. Heuristic DPC does not outperform linear+quantization (with MMSE estimation) substantially. }
\label{fig:abvsabc}
\end{figure}

%
%\begin{figure}
%\begin{center}
%\includegraphics[width = 3 in]{ratio_dpc_quant}
%\end{center}
%\caption{}
%\label{fig:dpcvsabc}
%\end{figure}
It is natural to ask how much there is to gain using a DPC-based strategy over a simple quantization strategy. Notice that the DPC-strategy gains not only from the slopey quantization, but also from the MMSE-estimation at the second controller. In Fig.~\ref{fig:line}, we eliminate the latter advantage by considering first a uniform quantization-based strategy with an appropriate scaling of the MLE so that it approximates the MMSE-estimation performance, and then the actual MMSE-estimation strategy for uniform quantization. Along the curve $k\sigma_0=\sqrt{10}$, there is significant gain in using this approximate-MMSE estimation over MLE, and further gain in using MMSE-estimation itself. This also shows that there is an interesting tradeoff between the  complexity of the second controller and the system performance.

From Fig.~\ref{fig:line}, along the curve $k\sigma_0=\sqrt{10}$, the DPC-based strategy performs only negligibly better than a quantization-based strategy with MMSE estimation. Fig.~\ref{fig:abvsabc} (a) shows that this is not true in general. A DPC-based strategy can perform up to $15\%$ better than a simple quantization-based scheme depending on the problem parameters. Interestingly, the advantage of using a DPC-based strategy for the case of $k=0.2,\sigma_0=5$ (which is used as the benchmark case in many papers, e.g.~\cite{LeeLauHo,marden}) is quite small. The maximum gain of about $15\%$ is obtained at $k\approx 10^{-0.2}\approx 0.63$, and $\sigma_0=1$ (and indeed, any $\sigma_0>1$.  In the future, we suggest the community use the point $(0.63,1)$ as the benchmark case. 

Given that there is an advantage in using a DPC-like strategy, an interesting question is whether the DPC parameter $\alpha$ that optimizes the DPC-based strategy's performance at infinite-lengths (in~\cite{WitsenhausenJournal}) gives good performance for the scalar case as well. Fig.~\ref{fig:abvsabc} (b) answers this question at least partially in the negative. This heuristic-DPC does only slightly better than a quantization strategy with MMSE estimation, whereas other values of $\alpha$ do significantly better.

Finally, we observe that while uniform bin-size quantization or DPC-based strategies are designed for atypical noise behavior, atypical behavior of the the initial state is better accommodated by using nonuniform bin-sizes (such as those in~\cite{LeeLauHo,marden}). Table~\ref{tbl:yuchiho} compares the two. Clearly, the advantage in having nonuniform slopey-quantization is small, but not negligible. It would be interesting to calibrate the advantage of nonuniform-bin sizes for $(k,\sigma_0)=(0.63,1)$, a maximum gain point for uniform-bin size slopey-quantization strategies. 

\begin{table}[h!b!p!]
\caption{Costs attained for the benchmark case of $k=0.2$, $\sigma_0=5$.}
\begin{center}
\begin{tabular}{|c|c|c|}
  \hline
  % after \\: \hline or \cline{col1-col2} \cline{col3-col4} ...
   & linear+quantization & Slopey-quantization \\
\hline
Lee, Lau and Ho  \cite{LeeLauHo} & 0.1713946 & 0.1673132 \\
Li, Marden and Shamma~\cite{marden} & --- & 0.1670790\\
  This paper & 0.1715335 & 0.1673654 \\
  \hline
\end{tabular}
\end{center}
  \label{tbl:yuchiho}
\end{table}

%A DPC-based strategy would work as follows. Given the initial state
%$\m{x}_0$, scale it by a factor $\alpha<1$ and quantize it using the
%lattice to a quantization point $\m{x}_q$. Now use
%$\m{u}_1=\m{x}_q-\alpha\m{x}_0$ as the first stage input, producing
%$\m{x}_1=\m{x}_q+(1-\alpha)\m{x}_0$. The controller $\co{2}$ can
%now perform an MMSE estimation. Alternatively, as a low-complexity
%estimation algorithm, \co{2} can estimate the lattice point $\m{x}_q$
%followed by linear estimation for the Gaussian $(1-\alpha)\m{x}_0$.

There are plenty of open problems that arise naturally. Both the lower and the upper bounds have room for improvement. The lower bound can be improved by tightening the vector lower bound of~\cite{WitsenhausenJournal} (one such tightening is performed in~\cite{ITW09Paper}) and obtaining corresponding finite-length results using the sphere-packing tools developed here. 

Tightening the upper bound can be performed by using  DPC-based techniques over lattices. Further, an exact analysis of the required
first-stage power when using a lattice would yield an improvement (as
pointed out earlier, for $m=1$, $\frac{1}{m}k^2r_c^2$ overestimates
the required first-stage cost), especially for small $m$. Improved
lattice designs with better packing-covering ratios would also improve
the upper bound.

%This raises a natural question --- what is the best ratio that we can hope to achieve with this line of analysis? Potentially (as in the infinite length case), a factor of two improvement can be obtained by using DPC instead of straight lattice quantization. A further factor of two improvement can be obtained by using a tightened version of the information theoretic lower bound of~\cite{WitsenhausenJournal}, which would entail resolving the conflict [...]. This would still leave a looseness of factor $$, which is inherent in the sphere packing bound. Tightening the sphere-packing bound using recent techniques might improve the constants further.  

Perhaps a more significant set of open problems are the next steps in understanding more realistic versions of Witsenhausen's problem,
specifically those that include costs on all the inputs and all the states~\cite{Allerton09Paper}, with noisy state evolution and noisy observations at both controllers. The hope is that solutions to these problems can then be used as the basis for provably-good nonlinear controller synthesis for larger distributed systems. Further, tools developed for solving these problems might help address multiuser problems in information theory, in the spirit of~\cite{WuInterferenceControl,EliaPaper1}.

%Implementing lattice encoding-decoding schemes is known to be
%computationally hard. Results in~\cite{CDCWitsenhausen} suggest that
%there might be tradeoffs between the computational complexity of the
%problem and the control costs attained. In this paper, we have extended the sphere-packing style bounding technique of~\cite{BlahutThesis,PinskerNoFeedback,OurUpperBoundPaper,waterslide} to a joint-source channel setting with unbounded distortion measure. This suggests that a better understanding of the tradeoffs suggested in~\cite{CDCWitsenhausen} may now be possible. 
\section*{Acknowledgments}
We gratefully acknowledge the support of the National Science Foundation (CNS-403427, CNS-093240, CCF-0917212 and CCF-729122), Sumitomo Electric and Samsung. We thank Amin Gohari, Bobak Nazer and Anand Sarwate for helpful discussions, and Gireeja Ranade for suggesting improvements in the paper. 

%%%%%%%%%%%%%%%%%%%%%%%
\appendices{}
%%%%%%%%%%%%%%%%%%%%%%%
\section{Proof of Lemma~\ref{lem:upperbound}}
\label{app:upperbound}
\begin{eqnarray}
\nonumber \expectp{\m{Z}}{\left(\|\m{Z}\|+r_p\right)^2\indi{\mathcal{E}_m}}&=&
\expectp{\m{Z}}{\|\m{Z}\|^2\indi{\mathcal{E}_m}}+r_p^2\Pr(\mathcal{E}_m)+2r_p\expectp{\m{Z}}{\left(\indi{\mathcal{E}_m}\right)
  \left(\|\m{Z}\|\indi{\mathcal{E}_m}\right)}\\ 
\nonumber&\overset{(a)}{\leq} &
\expectp{\m{Z}}{\|\m{Z}\|^2\indi{\mathcal{E}_m}}+r_p^2\Pr(\mathcal{E}_m) +2r_p\sqrt{\expectp{\m{Z}}{\indi{\mathcal{E}_m}}}
\sqrt{\expectp{\m{Z}}{\|\m{Z}\|^2\indi{\mathcal{E}_m}}}\\  
&=& \left( \sqrt{\expectp{\m{Z}}{\|\m{Z}\|^2\indi{\mathcal{E}_m}}  } + r_p\sqrt{ \Pr(\mathcal{E}_m)  }     \right)^2 ,
\label{eq:zplusrp}
\end{eqnarray}
where $(a)$ uses the Cauchy-Schwartz inequality~\cite[Pg. 13]{durrett}.

%It is shown in Appendix~\ref{app:upperbound} that the term on the RHS is equal to $\left(\sqrt{m\psi( m+2,r_p)} + r_p\sqrt{\psi(m,r_p)}   \right)^2$, and that 
%\begin{equation}
%\psi(m,r_p)\leq \exp\left(-\frac{r_p^2}{2}+\frac{m}{2}+\frac{m}{2}\lon{\frac{r_p^2}{m}}\right).
%\end{equation}

We wish to express $\expectp{\m{Z}}{\|\m{Z}\|^2\indi{\mathcal{E}_m}}$ in
terms of $\psi(m,r_p):=\Pr(\|\m{Z}\|\geq r_p)=\int_{\|\m{z}\|\geq
  r_p}\frac{e^{-\frac{\|\m{z}\|^2}{2}}}{\left(\sqrt{2\pi}\right)^m}d\m{z}$. 

Denote by $\mathcal{A}_m(r):=\frac{2\pi^{\frac{m}{2}} r^{m-1}
}{\Gamma\left(\frac{m}{2}\right)}$ the surface area of a sphere of
radius $r$ in $\mathbb{R}^m$~\cite[Pg. 458]{Courant}, where
$\Gamma(\cdot{})$ is the Gamma-function satisfying
$\Gamma(m)=(m-1)\Gamma(m-1)$, $\Gamma(1)=1$, and
$\Gamma(\frac{1}{2})=\sqrt{\pi}$. Dividing the space $\mathbb{R}^m$
into shells of thickness $dr$ and radii $r$, 
\begin{eqnarray}
\nonumber\expectp{\m{Z}}{\|\m{Z}\|^2\indi{\mathcal{E}_m}}&=&\int_{\|\m{z}\|\geq r_p}\|\m{z}\|^2\frac{e^{-\frac{\|\m{z}\|^2}{2}}}{\left(\sqrt{2\pi}\right)^m}d\m{z}= \int_{r\geq r_p}r^2\frac{e^{-\frac{r^2}{2}}}{\left(\sqrt{2\pi}\right)^m}\mathcal{A}_m(r)dr\\
\nonumber&=& \int_{r\geq r_p}r^2\frac{e^{-\frac{r^2}{2}}}{\left(\sqrt{2\pi}\right)^m}\frac{2\pi^{\frac{m}{2}} r^{m-1}  }{\Gamma\left(\frac{m}{2}\right)}dr\\
&=& \int_{r\geq r_p} \frac{e^{-\frac{r^2}{2}}2\pi}{\left(\sqrt{2\pi}\right)^{m+2}} \frac{2 \pi^{\frac{m+2}{2}} r^{m+1} }{\pi\frac{2}{m}\Gamma\left(\frac{m+2}{2}\right)}dr= m\psi(m+2,r_p).
\label{eq:psinplus2}
\end{eqnarray}
Using~\eqref{eq:zplusrp},~\eqref{eq:psinplus2}, and $r_p=\sqrt{\frac{mP}{\xi^2}}$
\begin{eqnarray*}
\nonumber\expectp{\m{Z}}{\left(\|\m{Z}\|+r_p\right)^2\indi{\mathcal{E}_m}}
\leq m\left(\sqrt{\psi(m+2,r_p)}+\sqrt{\frac{P}{\xi^2}}\sqrt{\psi(m,r_p)}\right)^2,
\end{eqnarray*}
which yields the first part of Lemma~\ref{lem:upperbound}. To obtain a
closed-form upper bound we consider $P>\xi^2$. It suffices to bound $\psi(\cdot{},\cdot{})$. 
\begin{eqnarray*}
\psi(m,r_p)&=&\Pr(\|\m{Z}\|^2\geq r_p^2)= \Pr(\exp(\rho\sum_{i=1}^mZ_i^2)\geq \exp(\rho r_p^2))\\
&\overset{(a)}{\leq} & \expectp{\m{Z}}{\exp(\rho\sum_{i=1}^mZ_i^2)}e^{-\rho r_p^2}
=\expectp{Z_1}{\exp(\rho Z_1^2)}^me^{-\rho r_p^2}
\overset{(\text{for}\;0<\rho<0.5)}{=} \frac{1}{(1-2\rho)^{\frac{m}{2}}}e^{-\rho r_p^2},
\end{eqnarray*}
where $(a)$ follows from the Markov inequality, and the last inequality follows from the fact that the moment generating function of a standard $\chi_2^2$ random variable is $\frac{1}{(1-2\rho)^{\frac{1}{2}}}$ for $\rho\in (0,0.5)$~\cite[Pg. 375]{ross}. Since this bound  holds for any $\rho\in (0,0.5)$, we choose the minimizing $\rho^*=\frac{1}{2}\left(1-\frac{m}{r_p^2}\right)$. Since $r_p^2=\frac{mP}{\xi^2}$, $\rho^*$ is indeed in $(0,0.5)$ as long as $P>\xi^2$. Thus,
\begin{eqnarray*}
\psi(m,r_p) \leq  \frac{1}{(1-2\rho^*)^{\frac{m}{2}}}e^{-\rho^* r_p^2}=  \left(\frac{r_p^2}{m}\right)^{\frac{m}{2}} e^{-\frac{1}{2}\left( 1-\frac{m}{r_p^2}  \right) r_p^2}= e^{-\frac{r_p^2}{2}+\frac{m}{2}+\frac{m}{2}\lon{\frac{r_p^2}{m}}}.
\end{eqnarray*}
Using the substitutions $r_c^2=mP$, $\xi=\frac{r_c}{r_p}$ and $r_p^2=\frac{mP}{\xi^2}$,
\begin{eqnarray}
\label{eq:psinrp}
 \Pr(\mathcal{E}_m)=\psi(m,r_p)=\psi\left(m,\sqrt{\frac{mP}{\xi^2}}\right)\leq e^{-\frac{mP}{2\xi^2}+\frac{m}{2}+\frac{m}{2}\lon{\frac{P}{\xi^2}}}, \;\text{and}
\end{eqnarray}
\begin{eqnarray}
\label{eq:psin2rp}
\expectp{\m{Z}}{\|\m{Z}\|^2\indi{\mathcal{E}_m}}\leq m\psi\left(m+2,\sqrt{\frac{mP}{\xi^2}}\right)\leq me^{-\frac{mP}{2\xi^2}+\frac{m+2}{2}+\frac{m+2}{2}\lon{\frac{mP}{(m+2)\xi^2}}}.
\end{eqnarray}
From~\eqref{eq:zplusrp},~\eqref{eq:psinrp} and~\eqref{eq:psin2rp}, 
\begin{eqnarray*}
\expectp{\m{Z}}{\left(\|\m{Z}\|+r_p\right)^2\indi{\mathcal{E}_m}}&\leq &\bigg(\sqrt{m}e^{-\frac{mP}{4\xi^2}+\frac{m+2}{4}+\frac{m+2}{4}\lon{\frac{mP}{(m+2)\xi^2}}}    \sqrt{\frac{mP}{\xi^2}}   e^{-\frac{mP}{4\xi^2}+\frac{m}{4}+\frac{m}{4}\lon{\frac{P}{\xi^2}}}\bigg)^2\\
&\overset{(\text{since}\;P>\xi^2)}{<} & \left(\sqrt{m}\left(1+\sqrt{\frac{P}{\xi^2}}\right)e^{-\frac{mP}{4\xi^2}+\frac{m+2}{4}+\frac{m+2}{4}\lon{\frac{P}{\xi^2}}}   \right)^2\\
&= & m\left(1+\sqrt{\frac{P}{\xi^2}}\right)^2e^{-\frac{mP}{2\xi^2}+\frac{m+2}{2}+\frac{m+2}{2}\lon{\frac{P}{\xi^2}}}.
%&\overset{(\text{if}\; \frac{P}{\xi^2}>8.1,m\geq 10)}{\leq} & me^{-\frac{mP}{4\xi^2}}.
\end{eqnarray*}
%where the last inequality holds only when $\frac{P}{\xi^2}>8.1$ and $m\geq 10$, and is a simple calculation that can be verified by plotting.

%%%%%%%%%%%%%%%%%%%%%%%%%%%%%%%%%%%%%%%%
\section{Proof of Lemma~\ref{lem:epg}}
\label{app:ep0g}
The following lemma is taken from~\cite{WitsenhausenJournal}.
\begin{lemma}
For any three random variables $A$, $B$ and $C$,
\begin{eqnarray*}
\expect{\|B-C\|^2}\geq \left(\left(\sqrt{\expect{\|A-C\|^2}}-\sqrt{\expect{\|A-B\|^2}}\right)^{+}\right)^2.
\end{eqnarray*}
\end{lemma}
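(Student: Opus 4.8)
The plan is to recognize the three quantities appearing in the statement as squared root-mean-square norms and then invoke the triangle inequality for that norm. For any square-integrable random vector $X$, define the norm $\|X\|_{L^2}:=\sqrt{\expect{\|X\|^2}}$, where $\|\cdot\|$ is the usual Euclidean norm. This $\|\cdot\|_{L^2}$ is a genuine seminorm on the space of square-integrable random vectors: it is nonnegative, absolutely homogeneous, and (this is the only substantive point) satisfies the triangle inequality. With this notation the three expectations in the statement are precisely $\|A-C\|_{L^2}^2$, $\|A-B\|_{L^2}^2$, and $\|B-C\|_{L^2}^2$, so the claim is just a triangle inequality that has been rearranged, clipped to its positive part, and squared.

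First I would write $A-C=(A-B)+(B-C)$ and apply the triangle inequality (Minkowski's inequality for $L^2$) to obtain
\begin{eqnarray*}
\|A-C\|_{L^2}\leq \|A-B\|_{L^2}+\|B-C\|_{L^2}.
\end{eqnarray*}
The one place that genuinely needs justification is this triangle inequality. It follows from the Cauchy-Schwartz inequality --- already used in Appendix~\ref{app:upperbound} --- applied to the cross term after expanding $\expect{\|(A-B)+(B-C)\|^2}$, exactly as in the standard proof of Minkowski's inequality; alternatively one can simply cite it as a known property of $L^2$.

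Next I would rearrange to $\|B-C\|_{L^2}\geq \|A-C\|_{L^2}-\|A-B\|_{L^2}$. Because the left-hand side is nonnegative, replacing the right-hand side by its positive part cannot weaken the inequality, giving $\|B-C\|_{L^2}\geq \left(\|A-C\|_{L^2}-\|A-B\|_{L^2}\right)^{+}$. Finally, squaring both sides --- which preserves the direction of the inequality since both sides are nonnegative --- and substituting back $\|A-C\|_{L^2}=\sqrt{\expect{\|A-C\|^2}}$ and $\|A-B\|_{L^2}=\sqrt{\expect{\|A-B\|^2}}$ yields the claimed bound. There is essentially no hard step here: beyond the routine $L^2$ triangle inequality, the only things requiring care are the order-preservation of the positive-part and squaring operations, and both hold precisely because every quantity involved is nonnegative. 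I expect the whole argument to occupy just a few lines.
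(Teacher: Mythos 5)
Your proof is correct and matches the paper's approach: the paper defers the proof of this lemma to \cite[Appendix II]{WitsenhausenJournal}, but the argument there (and the one the paper itself invokes as a ``triangle inequality argument'' in~\eqref{eq:triangle1}) is exactly the $L^2$ (root-mean-square) triangle inequality, rearranged, clipped to its positive part, and squared. Your handling of the two nonnegativity-based steps is the right justification, so nothing is missing.
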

\begin{proof}
See~\cite[Appendix II]{WitsenhausenJournal}.
\end{proof}
Choosing $A=\m{X}_0$, $B=\m{X}_1$ and $C=\whatmn{X}_1$, 
\begin{eqnarray}
\nonumber && \expectp{\m{X}_0,\m{Z}_G}{J_2^{(\gamma)}(\m{X}_0,\m{Z}_G)|\m{Z}_G\in\mathcal{S}_L^G} =\frac{1}{m}\expectp{\m{X}_0,\m{Z}_G}{\|\m{X}_1-\whatmn{X}_1\|^2|\m{Z}_G\in\mathcal{S}_L^G}\\
\nonumber&\geq &\bigg(\bigg(\sqrt{\frac{1}{m}\expectp{\m{X}_0,\m{Z}_G}{\|\m{X}_0-\whatmn{X}_1\|^2|\m{Z}_G\in \mathcal{S}_L^G}} - \sqrt{\frac{1}{m}\expectp{\m{X}_0,\m{Z}_G}{\|\m{X}_0-\m{X}_1\|^2|\m{Z}_G\in \mathcal{S}_L^G}}  \bigg)^+\bigg)^2\\
 &= &\bigg(\bigg(\sqrt{\frac{1}{m}\expectp{\m{X}_0,\m{Z}_G}{\|\m{X}_0-\whatmn{X}_1\|^2|\m{Z}_L\in \mathcal{S}_L^G}}- \sqrt{P} \bigg)^+\bigg)^2,
%&\geq &\left(\left(\sqrt{D_0(C_{G,L}^{(m)}(P,\sigma_0^2,\sigma_G^2,L))} - \sqrt{P} \right)^+\right)^2,
\label{eq:sqrtd}
\end{eqnarray}
since $\m{X}_0-\m{X}_1=\m{U}_1$ is independent of $\m{Z}_G$ and
$\expect{\|\m{U}_1\|^2}= mP$. Define $\m{Y}_L:=\m{X}_1+\m{Z}_L$ to
be the output when the observation noise $\m{Z}_L$ is distributed as a truncated Gaussian distribution:
\begin{equation}
\label{eq:fz}
f_{Z_L}(\m{z}_L)=\left\{\begin{array}{ll}c_m(L)\frac{e^{-\frac{\|\m{z}_L\|^2}{2\sigma_G^2}}}{\left(\sqrt{2\pi\sigma_G^2}\right)^m}&\m{z}_L\in\mathcal{S}_L^G\\
0& \text{otherwise.}\end{array}\right.
\end{equation}
Let the estimate at the second controller on observing $\m{y}_L$ be
denoted by $\whatmn{X}_L$. Then, by the definition of conditional
expectations, 
\begin{equation}
\label{eq:xl}
\expectp{\m{X}_0,\m{Z}_G}{\|\m{X}_0-\whatmn{X}_1\|^2|\m{Z}_G\in \mathcal{S}_L^G} = \expectp{\m{X}_0,\m{Z}_G}{\|\m{X}_0-\whatmn{X}_L\|^2}.
\end{equation}
To get a lower bound, we now allow the controllers to optimize
themselves with the additional knowledge that the observation noise $\m{z}$ 
must fall in $\mathcal{S}_L^G$. In order to prevent the first controller from ``cheating'' and allocating different powers to the two events (\emph{i.e.} $\m{z}$ falling or not falling in $\mathcal{S}_L^G$), we enforce the constraint that the power $P$ must not change with this additional knowledge. Since the controller's observation $\m{X}_0$ is independent of $\m{Z}$, this constraint is satisfied by the original controller (without the additional knowledge) as well, and hence the cost for the system with the additional knowledge is still a valid lower bound to that of the original system.

The rest of the proof uses ideas from channel coding and the rate-distortion theorem~\cite[Ch. 13]{CoverThomas} from information theory. We view the problem as a problem of implicit communication from the first controller to the second. Notice that for a given $\gamma(\cdot{})$, $\m{X}_1$ is a function of $\m{X}_0$, $\m{Y}_L=\m{X}_1+\m{Z}_L$ is conditionally independent of $\m{X}_0$ given $\m{X}_1$ (since the noise $\m{Z}_L$ is additive and independent of $\m{X}_1$ and $\m{X}_0$). Further, $\whatmn{X}_L$ is a function of $\m{Y}_L$. Thus $\m{X}_0-\m{X}_1-\m{Y}_L-\whatmn{X}_L$ form a Markov chain. Using the
data-processing inequality~\cite[Pg. 33]{CoverThomas},
\begin{equation}
\label{eq:dpi}
I(\m{X}_0;\whatmn{X}_L)\leq I(\m{X}_1;\m{Y}_L),
\end{equation}
where $I(A,B)$ is the expression for mutual information expression between two random variables $A$ and $B$ (see, for example,~\cite[Pg. 18, Pg. 231]{CoverThomas}). To estimate the distortion to which $\m{X}_0$ can be communicated across this truncated Gaussian channel (which, in turn, helps us lower bound the MMSE in estimating $\m{X}_1$), we need to upper bound the term on the RHS of~\eqref{eq:dpi}. 
\begin{lemma}
\label{lem:capacity}
\begin{equation*} 
\frac{1}{m}I(\m{X}_1;\m{Y}_L) \leq  \frac{1}{2}\lo{\frac{e^{1-d_m(L)} (\bar{P}+d_m(L)\sigma_G^2)c_m^{\frac{2}{m}}(L)}{\sigma_G^2} }.
\end{equation*}
\end{lemma}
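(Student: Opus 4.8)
The plan is to evaluate the mutual information through differential entropies. Since the truncated noise $\m{Z}_L$ is constructed independently of $\m{X}_1$ (the conditioning event $\{\m{Z}_G\in\mathcal{S}_L^G\}$ depends only on $\m{Z}_G$, which is independent of $\m{X}_0$ and hence of $\m{X}_1$), the channel $\m{Y}_L=\m{X}_1+\m{Z}_L$ is additive with independent noise, so by shift-invariance of differential entropy
\begin{equation*}
I(\m{X}_1;\m{Y}_L)=h(\m{Y}_L)-h(\m{Y}_L\,|\,\m{X}_1)=h(\m{Y}_L)-h(\m{Z}_L).
\end{equation*}
First I would upper bound $h(\m{Y}_L)$ by a Gaussian maximum-entropy argument, and then compute $h(\m{Z}_L)$ in closed form. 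The key observation is that both quantities are governed by the \emph{same} second moment of the truncated noise, so most of the work is a single computation.

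For the output entropy I would use that, among all densities on $\mathbb{R}^m$ with a fixed covariance $K$, the Gaussian maximizes differential entropy, then apply AM--GM to the eigenvalues of $K$ together with $\mathrm{tr}(K)\le\expect{\|\m{Y}_L\|^2}$, giving
\begin{equation*}
h(\m{Y}_L)\le \frac{m}{2}\lo{2\pi e\cdot\tfrac{1}{m}\expect{\|\m{Y}_L\|^2}}.
\end{equation*}
The second moment splits as $\frac{1}{m}\expect{\|\m{Y}_L\|^2}=\bar{P}+\frac{1}{m}\expect{\|\m{Z}_L\|^2}$, where $\bar{P}:=\frac1m\expect{\|\m{X}_1\|^2}$: the cross term $\expect{\langle\m{X}_1,\m{Z}_L\rangle}$ vanishes because $\m{Z}_L$ is independent of $\m{X}_1$ and, being a spherically-symmetric truncation of a zero-mean Gaussian, has zero mean. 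The remaining piece is the second moment of the truncated noise, which I would evaluate by exactly the shell-integration identity behind \eqref{eq:psinplus2}: since $\expect{\|\mathbf{Z}\|^2}=m$ for a standard Gaussian, the in-sphere portion is $m\,\Pr(\|\mk{Z}{m+2}\|^2\le mL^2)$, and after rescaling by $\sigma_G$ and normalizing by $\Pr(\|\mathbf{Z}\|^2\le mL^2)$ one obtains $\expect{\|\m{Z}_L\|^2}=m\sigma_G^2\,d_m(L)$. Hence $\frac{1}{m}\expect{\|\m{Y}_L\|^2}=\bar{P}+d_m(L)\sigma_G^2$.

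For the noise entropy I would substitute the explicit density \eqref{eq:fz} directly into $h(\m{Z}_L)=-\expectp{\m{Z}_L}{\lo{f_{Z_L}(\m{Z}_L)}}$. The log-density is affine in $\|\m{Z}_L\|^2$, so reusing $\expect{\|\m{Z}_L\|^2}=m\sigma_G^2 d_m(L)$ gives the exact value
\begin{equation*}
h(\m{Z}_L)=-\lo{c_m(L)}+\frac{m}{2}\lo{2\pi\sigma_G^2\,e^{d_m(L)}}.
\end{equation*}
Subtracting this from the bound on $h(\m{Y}_L)$, the factors $2\pi$ cancel, the $e$ from the output bound combines with $e^{d_m(L)}$ to give $e^{1-d_m(L)}$, and $-\lo{c_m(L)}$ contributes $+\frac{m}{2}\lo{c_m^{2/m}(L)}$; dividing by $m$ produces exactly the claimed inequality. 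I expect the main obstacle to be the truncated-Gaussian second-moment identity $\expect{\|\m{Z}_L\|^2}=m\sigma_G^2 d_m(L)$ — in particular arranging that the normalizer $c_m(L)$ and the ratio $d_m(L)$ emerge with the right powers — together with carefully justifying the zero-mean and independence facts that eliminate the cross term; once these are settled, the remainder is bookkeeping with the two entropy formulas.
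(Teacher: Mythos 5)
Your proposal is correct and follows essentially the same route as the paper's proof: the decomposition $I(\m{X}_1;\m{Y}_L)=h(\m{Y}_L)-h(\m{Z}_L)$ justified by independence of the truncated noise from $\m{X}_1$, a Gaussian maximum-entropy bound on $h(\m{Y}_L)$ governed by the truncated second moment $\frac{1}{m}\expect{\|\m{Z}_L\|^2}=d_m(L)\sigma_G^2$, and an exact evaluation of $h(\m{Z}_L)$ from the explicit density. (The paper reaches the same output-entropy bound via the chain rule, per-coordinate Gaussian bounds, and Jensen's inequality, whereas you use the joint Gaussian maximum-entropy bound plus AM--GM on the covariance eigenvalues; these are interchangeable standard steps.) The one substantive discrepancy is your definition $\bar{P}:=\frac{1}{m}\expect{\|\m{X}_1\|^2}$: in the paper $\bar{P}:=(\sigma_0+\sqrt{P})^2$, and it is this quantity that enters $\kappa_2$ in Theorem~\ref{thm:newbound} and Lemma~\ref{lem:epg}. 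Your bound, stated in terms of the true power of $\m{X}_1$, is stronger, but to obtain the lemma as it is actually used downstream you still need the bridging inequality $\frac{1}{m}\expectp{\m{X}_0}{\|\m{X}_1\|^2}\leq(\sigma_0+\sqrt{P})^2$, which the paper derives by expanding $\|\m{X}_0+\m{U}_1\|^2$ and applying Cauchy--Schwarz to the cross term $\expectp{\m{X}_0}{{\m{X}_0}^T\m{U}_1}$; combined with monotonicity of the logarithm this closes the gap. Adding that one line makes your argument complete.
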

\begin{proof}
We first obtain an upper bound to the power of $\m{X}_1$ (this bound is the same as that used in~\cite{WitsenhausenJournal}):
\begin{eqnarray*}
\expectp{\m{X}_0}{\|\m{X}_1\|^2}&=&\expectp{\m{X}_0}{\|\m{X}_0+\m{U}_1\|^2}=\expectp{\m{X}_0}{\|\m{X}_0\|^2}+\expectp{\m{X}_0}{\|\m{U}_1\|^2}+2\expectp{\m{X}_0}{{\m{X}_0}^T\m{U}_1}\\
&\overset{(a)}{\leq} & \expectp{\m{X}_0}{\|\m{X}_0\|^2}+\expectp{\m{X}_0}{\|\m{U}_1\|^2}
+2\sqrt{\expectp{\m{X}_0}{\|\m{X}_0\|^2}}\sqrt{\expectp{\m{X}_0}{\| \m{U}_1\|^2}}\\
&\leq & m(\sigma_0+\sqrt{P})^2,
\end{eqnarray*}
where $(a)$ follows from the Cauchy-Schwartz inequality. We use the following definition of \textit{differential entropy} $h(A)$ of a continuous random variable $A$~\cite[Pg. 224]{CoverThomas}:
\begin{equation}
h(A) = -\int_S f_A(a) \lo{f_A(a)} da, 
\end{equation}
where $f_A(a)$ is the pdf of $A$, and $S$ is the support set of $A$. Conditional differential entropy is defined similarly~\cite[Pg. 229]{CoverThomas}. 

Let $\bar{P}:=(\sigma_0+\sqrt{P})^2$. Now, $\expect{Y_{L,i}^2}  = \expect{X_{1,i}^2} + \expect{Z_{L,i}^2} $ (since $X_{1,i}$ is independent of $Z_{L,i}$ and by symmetry, $Z_{L,i}$ are zero mean random variables). Denote $\bar{P}_i=\expect{X_{1,i}^2}$ and $\sigma_{G,i}^2=\expect{Z_{L,i}^2}$. In the following, we derive an upper bound $C_{G,L}^{(m)}$ on $\frac{1}{m}I(\m{X}_1;\m{Y}_L)$. 
\begin{eqnarray}
\nonumber C_{G,L}^{(m)}&:=&\sup_{p(\m{X}_1):\expect{\|\m{X}_1\|^2}\leq m\bar{P}}\frac{1}{m}I(\m{X}_1;\m{Y}_L)\\
\nonumber &\overset{(a)}{=}&\sup_{p(\m{X}_1):\expect{\|\m{X}_1\|^2}\leq m\bar{P}}\frac{1}{m}h(\m{Y}_L)-\frac{1}{m}h(\m{Y}_L|\m{X}_1)\\
\nonumber &   \overset{}{=}  &  \sup_{p(\m{X}_1):\expect{\|\m{X}_1\|^2}\leq m\bar{P}}\frac{1}{m}h(\m{Y}_L)-\frac{1}{m}h(\m{X}_1+\m{Z}_L|\m{X}_1)\\
\nonumber &   \overset{(b)}{=}  &  \sup_{p(\m{X}_1):\expect{\|\m{X}_1\|^2}\leq m\bar{P}}\frac{1}{m}h(\m{Y}_L)-\frac{1}{m}h(\m{Z}_L|\m{X}_1)\\
\nonumber &   \overset{(c)}{=}  &  \sup_{p(\m{X}_1):\expect{\|\m{X}_1\|^2}\leq m\bar{P}}\frac{1}{m}h(\m{Y}_L)-\frac{1}{m}h(\m{Z}_L)\\
\nonumber &\overset{(d)}{\leq} &\sup_{p(\m{X}_1):\expect{\|\m{X}_1\|^2}\leq m\bar{P}}\frac{1}{m}\sum_{i=1}^mh(Y_{L,i})-\frac{1}{m}h(\m{Z}_L)\\
\nonumber &\overset{(e)}{\leq} &\sup_{\bar{P}_i:\sum_{i=1}^m\bar{P}_i \leq m\bar{P}} \frac{1}{m}\sum_{i=1}^m\frac{1}{2}\lo{2\pi e(\bar{P}_i+\sigma_{G,i}^2)}-\frac{1}{m}h(\m{Z}_L)\\
&\overset{(f)}{\leq} & \frac{1}{2}\lo{2\pi e (\bar{P}+d_m(L)\sigma_G^2)}-\frac{1}{m}h(\m{Z}_L).
\label{eq:cn}
\end{eqnarray}
Here, $(a)$ follows from the definition of mutual information~\cite[Pg. 231]{CoverThomas}, $(b)$ follows from the fact that translation does not change the differential entropy~\cite[Pg. 233]{CoverThomas}, $(c)$ uses independence of $\m{Z}_L$ and $\m{X}_1$, and $(d)$ uses the chain rule for differential entropy~\cite[Pg. 232]{CoverThomas} and the fact that conditioning reduces entropy~\cite[Pg. 232]{CoverThomas}. In $(e)$, we used the fact that Gaussian random variables maximize
differential entropy. The inequality $(f)$ follows from the concavity-$\cap$ of the $\log(\cdot{})$
function and an application of Jensen's inequality~\cite[Pg. 25]{CoverThomas}. We also use the fact that 
$\frac{1}{m}\sum_{i=1}^m\sigma_{G,i}^2= d_m(L)\sigma_G^2$, which can be proven as follows
\begin{eqnarray}
\nonumber\frac{1}{m}\expect{\sum_{i=1}^m Z_{L,i}^2 }&\overset{(\text{using}~\eqref{eq:fz})}{=}& \frac{\sigma_G^2}{m}\int_{\m{z}\in\mathcal{S}_L^G}\frac{\|\m{z}\|^2}{\sigma_G^2} c_m(L)\frac{\exp\left(-\frac{\|\m{z}_G\|^2}{2\sigma_G^2}\right)}{\left(\sqrt{2\pi\sigma_G^2}\right)^m}d\m{z}_G\\
\nonumber & =& \frac{c_m(L)\sigma_G^2}{m}\expect{\|\m{Z}_G\|^2\indi{\|\m{Z}_G\|\leq \sqrt{mL^2\sigma_G^2}}}\\
\nonumber&\overset{(\m{\widetilde{Z}}:=\frac{\m{Z}_G}{\sigma_G})}{=}&\frac{c_m(L)\sigma_G^2}{m}\expect{\|\m{\widetilde{Z}}\|^2\indi{\|\m{\widetilde{Z}}\|\leq \sqrt{mL^2}}}\\
\nonumber&=& \frac{c_m(L)\sigma_G^2}{m}\bigg(\expect{\|\m{\widetilde{Z}}\|^2}-\expect{\|\m{\widetilde{Z}}\|^2\indi{\|\m{\widetilde{Z}}\|> \sqrt{mL^2}}}\bigg)\\
\nonumber&\overset{(\text{using}~\eqref{eq:psinplus2})}{=}&\frac{c_m(L)\sigma_G^2}{m}\left(m-m\psi(m+2,\sqrt{mL^2})\right)\\
 &=&c_m(L)\left(1-\psi(m+2,L\sqrt{m})  \right)\sigma_G^2 =  d_m(L)\sigma_G^2.
\label{eq:expectzl}
\end{eqnarray}
We now compute $h(\m{Z}_L)$
\begin{eqnarray}
\label{eq:hz}
\nonumber  h(\m{Z}_L)&=&\int_{\m{z}\in \mathcal{S}_L^G}f_{Z_L}(\m{z})\lo{\frac{1}{f_{Z_L}(\m{z})}}d\m{z}=\int_{\m{z}\in \mathcal{S}_L^G}f_{Z_L}(\m{z})\lo{\frac{\left(\sqrt{2\pi \sigma_G^2}\right)^m}{c_m(L)e^{-\frac{\|\m{z}\|^2}{2\sigma_G^2}}}}d\m{z}\\
&=& -\lo{c_m(L)}+\frac{m}{2}\lo{2\pi\sigma_G^2}+\int_{\m{z}\in\mathcal{S}_L^G}c_m(L)f_{G}(\m{z})\frac{\|\m{z}\|^2}{2\sigma_G^2}\lo{e}d\m{z}.
\end{eqnarray}
Analyzing the last term of~\eqref{eq:hz},
\begin{eqnarray}
\nonumber \int_{\m{z}\in\mathcal{S}_L^G}c_m(L)f_{G}(\m{z})\frac{\|\m{z}\|^2}{2\sigma_G^2}\lo{e}d\m{z} &=&\frac{\lo{e}}{2\sigma_G^2} \int_{\m{z}\in\mathcal{S}_L^G}c_m(L)\frac{     e^{-\frac{\|\m{z}\|^2}{2\sigma_G^2}}     } {  \left(\sqrt{2\pi \sigma_G^2}\right)^m  }\|\m{z}\|^2d\m{z}\\
\nonumber &=&\frac{\lo{e}}{2\sigma_G^2} \int_{\m{z}}f_{Z_L}(\m{z})\|\m{z}\|^2d\m{z}\\
\nonumber &\overset{(\text{using}~\eqref{eq:fz})}{=}&\frac{\lo{e}}{2\sigma_G^2}\expectp{G}{\|\m{Z}_L\|^2}  = \frac{\lo{e}}{2\sigma_G^2}  \expectp{G}{\sum_{i=1}^m  Z_{L,i}^2 }  \\
&\overset{(\text{using}~\eqref{eq:expectzl})}{=}& \frac{\lo{e}}{2\sigma_G^2}md_m(L)\sigma_G^2=\frac{m\lo{e^{d_m(L)}}}{2}. 
\label{eq:lastterm}
\end{eqnarray}
The expression $C_{G,L}^{(m)}$ can now be upper bounded using~\eqref{eq:cn},~\eqref{eq:hz} and~\eqref{eq:lastterm} as follows.
\begin{eqnarray}
\nonumber C_{G,L}^{(m)}&\leq& \frac{1}{2}\lo{2\pi e (\bar{P}+d_m(L)\sigma_G^2)}+\frac{1}{m}\lo{c_m(L)} - \frac{1}{2}\lo{2\pi\sigma_G^2}-\frac{1}{2}\lo{e^{d_m(L)}}\\
\nonumber &=& \frac{1}{2}\lo{2\pi e (\bar{P}+d_m(L)\sigma_G^2)}+\frac{1}{2}\lo{c_m^{\frac{2}{m}}(L)} - \frac{1}{2}\lo{2\pi\sigma_G^2}-\frac{1}{2}\lo{e^{d_m(L)}}\\
 &= & \frac{1}{2}\lo{\frac{2\pi e  (\bar{P}+d_m(L)\sigma_G^2)c_m^{\frac{2}{m}}(L)}{2\pi \sigma_G^2 e^{d_m(L)}} }=  \frac{1}{2}\lo{\frac{e^{1-d_m(L)} (\bar{P}+d_m(L)\sigma_G^2)c_m^{\frac{2}{m}}(L)}{\sigma_G^2} }. 
\label{eq:capacitybd}
\end{eqnarray}
\end{proof}
Now, recall that the rate-distortion function $D_m(R)$ for squared error distortion for source $\m{X}_0$ and reconstruction $\whatmn{X}_L$ is,
\begin{equation}
D_m(R):=
\inf_{\scriptsize \begin{array}{c}
p(\whatmn{X}_L|\m{X}_0)\\
\frac{1}{m}I(\m{X}_0;\whatmn{X}_L)\leq R
\end{array}}
\frac{1}{m}\expectp{\m{X}_0,\m{Z}_G}{\|\m{X}_0-\whatmn{X}_L\|^2},
\end{equation}
which is the dual of the rate-distortion function~\cite[Pg. 341]{CoverThomas}. 
Since $I(\m{X}_0;\whatmn{X}_L)\leq mC_{G,L}^{(m)}$, using the converse to
the rate distortion theorem~\cite[Pg. 349]{CoverThomas} and the upper
bound on the mutual information represented by $C_{G,L}^{(m)}$, 
\begin{equation}
\label{eq:ratedist}
\frac{1}{m} \expectp{\m{X}_0,\m{Z}_G}{\|\m{X}_0-\whatmn{X}_L\|^2} \geq D_m(C_{G,L}^{(m)}).
\end{equation}
Since the Gaussian source is iid, $D_m(R)=D(R)$, where
$D(R)=\sigma_0^22^{-2R}$ is the distortion-rate function for a
Gaussian source of variance
$\sigma_0^2$~\cite[Pg. 346]{CoverThomas}. Thus,
using~\eqref{eq:sqrtd},~\eqref{eq:xl} and~\eqref{eq:ratedist},   
\begin{eqnarray*}
\expectp{\m{X}_0,\m{Z}_G}{J_2^{(\gamma)}(\m{X}_0,\m{Z})|\m{Z}\in\mathcal{S}_L^G} \geq \left(\left(\sqrt{D(C_{G,L}^{(m)})} - \sqrt{P} \right)^+\right)^2.
\end{eqnarray*}
%where $D_0(\cdot{})$ is the distortion-rate function for a Gaussian source of variance $\sigma_0^2$, and $C_{G,L}^{(m)}(P,\sigma_0^2,\sigma_G^2,L)$ is the capacity of an additive noise channel with an input power constrained by $\bar{P}$, and noise such that an $m$-length noise vector $\m{z}_L$ is a Gaussian random vector of elements distributed iid $\mathcal{N}(0,\sigma_G^2)$ conditioned on the vector lying in $\mathcal{S}_L^G$. Here $\bar{P}:=\sigma_0^2+2\sigma_0\sqrt{P}+P$ is an upper bound to the average power to the channel input (attained with $u_1$ is aligned with $x_0$). In the following, we upper bound this capacity. Let $\m{y}_L$ denote the output of the channel. 
Substituting the bound on $C_{G,L}^{(m)}$ from~\eqref{eq:capacitybd},
\begin{eqnarray*}
D(C_{G,L}^{(m)})= \sigma_0^2  2^{-2C_{G,L}^{(m)}} =\frac{\sigma_0^2\sigma_G^2}{c_m^{\frac{2}{m}}(L)e^{1-d_m(L)} (\bar{P}+d_m(L)\sigma_G^2)}.
%& = & \frac{\sigma_0^2\sigma_G^2}{c_m^{\frac{2}{m}}(L)e \left((\sigma_0+\sqrt{P})^2+c(L)\sigma_G^2\right)}.
\end{eqnarray*}
Using~\eqref{eq:sqrtd}, this completes the proof of the lemma. Notice
that $c_m(L)\rightarrow 1$ and $d_m(L)\rightarrow 1$ for fixed $m$ as $L\rightarrow\infty$, as well as for fixed $L>1$ as $m\rightarrow\infty$. So the lower bound on $D(C_{G,L}^{(m)})$ approaches $\kappa$ of Theorem~\ref{thm:oldbound} in both of 
these two limits. 

%\pagebreak{}

%
%\section{An alternative proof of the lower bound on $\expect{\|\m{X}_0-\m{\widehat{X}}_L\|^2}$.}
%\begin{eqnarray*}
%&&\frac{1}{m}\expect{\|\m{X}_0-\m{\widehat{X}}_L\|^2}\\
%&\geq &\inf_{p(\m{\widehat{X}}_L|\m{Y}),\;p(\m{X}_1|\m{X}_0)} \frac{1}{m}\expect{\|\m{X}_0-\m{\widehat{X}}_L\|^2}\\
%&\geq & \inf_{p(\m{\widehat{X}}_L|\m{X}_0): I(\m{X}_0;\m{\widehat{X}}_L)\leq m C_{G,L}^{(m)}} \frac{1}{m}\expect{\|\m{X}_0-\m{\widehat{X}}_L\|^2}\\
%&\overset{(a)}{\geq} & \inf_{p(\widehat{X}_L|X_0): I(X_0;\widehat{X}_L)\leq  C_{G,L}^{(m)}}\expect{\|X_0-\widehat{X}_L\|^2}\\
%&\geq & D(C_{G,L}^{(m)}).
%\end{eqnarray*}
%Here $X_0,\;X_L$ are defined in the usual manner using $Q$ that is distributed uniformly on $\{1,2,\ldots,m\}$. The only tricky step is $(a)$, where we use the single-letterization inequality $I(\m{X}_0;\m{\widehat{X}}_L)\geq m I(X_0;X_L)$, which follows from the fact that $X_{0,i}$ are iid. 

\bibliographystyle{IEEEtran}
\bibliography{IEEEabrv,MyMainBibliography}
\end{document}